%BeginFileInfo
%%Publisher=ARXIV
%%Project=AOS
%%Manuscript=AOS1350
%%Stage=
%%TID=linak
%%Format=latex
%%Distribution=arXiv
%%Destination=PDF
%%PDF type=
%%DVI.Maker=arXiv_tex_dvi
%%PDF.Maker=arXiv_tex_pdf
%EndFileInfo
%
% Institute of Mathematical Statistics (IMS)
% Journal "The Annals of Statistics"

%secthm,seceqn,secfloat,nameyear,number,noautosecdot
\input ./style/arxiv-general.cfg
\documentclass[aos,MSNbibl,dvips]{arximspdf}
\makeatletter
   \@ifpackageloaded{graphicx}{}{\usepackage{graphicx}}
\makeatother
\usepackage{mathbh,dcolumn}

% settings
%
%\innerskeltrue

% article settings
\doi{10.1214/15-AOS1350}% Updated by VTEXPTS2LaTeX.exe, 26.06.2015
%15:20
\volume{43}
\issue{6}
\pubyear{2015}
\firstpage{2507}
\lastpage{2536}
\docsubty{FLA}

\makeatletter
%\innerskeltrue
\newcolumntype{d}[1]{D{.}{.}{#1}}
\def\sfrac#1#2{#1/#2}

\def\sklfrac#1#2{(#1/#2)}

\def\sklafrac#1#2{(#1/(#2))}

\newcommand{\rrvert}{\vert}
\newcommand{\rrVert}{\Vert}
\newcommand{\llvert}{\vert}
\newcommand{\llVert}{\Vert}
\newtheorem{theorem}{Theorem}
\newtheorem{lemma}[theorem]{Lemma}
\newtheorem{proposition}[theorem]{Proposition}
\newtheorem{prop}[theorem]{Proposition}
\newtheorem{teo}[theorem]{Theorem}
\newtheorem{cor}[theorem]{Corollary}
\newtheorem{lem}[theorem]{Lemma}
\newproclaim{rem}{Remark}

\renewcommand{\mid}{|}

\newcommand{\field}[1]{\mathbb{#1}}
\newcommand{\R}{\field{R}}
\newcommand{\E}{\field{E}}
\newcommand{\EXP}{\E}
\newcommand{\PROB}{\field{P}}
\newcommand{\argmin}{\mathop{\operatorname{arg}\operatorname{min}}}
\newcommand{\esssup}{\mathop{\operatorname{ess}\operatorname{sup}}}
\newcommand{\C}{\mathcal{C}}
\newcommand{\X}{\mathcal{X}}
\newcommand{\Z}{\mathcal{Z}}
\newcommand{\ol}{\overline}
\newcommand{\wh}{\widehat}
\newcommand{\wb}{\overline}
\newcommand{\hf}{\wh{f}}
\newcommand{\hr}{\wh{r}}
\newcommand{\br}{\wb{r}}
\newcommand{\ep}{\varepsilon}
\newcommand{\F}{\mathcal{F}} %The class F
\newcommand{\G}{\mathcal{G}} %The class G
% The expected value
\newcommand{\N}{\mathbb{N}} %The integers
\makeatother

\begin{document}
\begin{frontmatter}

%\dochead{}
\title{Empirical risk minimization for heavy-tailed losses}
\runtitle{Risk minimization for heavy tails}

\begin{aug}
% Corresponding author: Gabor Lugosi - gabor.lugosi@gmail.com% Updated by VTEXPTS2LaTeX.exe, 30.06.2015 15:27
%by VTEXPTS2LaTeX.exe, 30.06.2015 09:06
%by VTEXPTS2LaTeX.exe, 26.06.2015 15:20
\author[A]{\fnms{Christian}~\snm{Brownlees}\thanksref{T1}\ead[label=e1]{christian.brownlees@upf.edu}},
\author[B]{\fnms{Emilien}~\snm{Joly}\ead[label=e2]{emilien.joly@ens.fr}}
\and
\author[C]{\fnms{G\'abor}~\snm{Lugosi}\corref{}\thanksref{T1}\ead[label=e3]{gabor.lugosi@upf.edu}}
\runauthor{C. Brownlees, E. Joly and G. Lugosi}
\affiliation{Pompeu Fabra University, HEC Paris--CNRS and Pompeu Fabra University}
%\dedicated{}
\address[A]{C. Brownlees\\
Department of Economics and Business\\
Pompeu Fabra University\\
Ramon Trias Fargas 25-27\\
08005 Barcelona\\
Spain\\
\printead{e1}}
\address[B]{E.~Joly\\
GREGHEC\\
HEC Paris--CNRS\\
1 rue de la lib\'eration 78350\\
Jouy-en-Josas\\
France\\
\printead{e2}}
\address[C]{G. Lugosi\\
ICREA and Department of Economics and Business\\
Pompeu Fabra University\\
Ramon Trias Fargas 25-27\\
08005 Barcelona\\
Spain\\
\printead{e3}}
\end{aug}
\thankstext{T1}{Supported by the Spanish Ministry of Science and
Technology Grant MTM2012-37195.}

% HISTORY:
%
\received{\smonth{6} \syear{2014}}% Updated by VTEXPTS2LaTeX.exe,
%26.06.2015 15:20
%
\revised{\smonth{5} \syear{2015}}% Updated by VTEXPTS2LaTeX.exe,
%26.06.2015 15:20

% ABSTRACT
%
\begin{abstract}
The purpose of this paper is to discuss empirical risk
minimization when the losses are not necessarily bounded
and may have a distribution with heavy tails. In such
situations, usual empirical averages may fail to provide reliable
estimates and empirical risk minimization may provide large excess risk.
However, some robust mean estimators proposed in the literature
may be used to replace empirical means.
In this paper, we investigate empirical risk minimization
based on a robust estimate proposed by Catoni.
We develop performance bounds based on chaining arguments
tailored to Catoni's mean estimator.
\end{abstract}

% KEYWORDS
% Pirmas kwd is didziosios raides
%
\begin{keyword}[class=AMS]
\kwd[Primary ]{}
\kwd{62F35}
\kwd[; secondary ]{62F12}
\end{keyword}
\begin{keyword}
\kwd{Empirical risk minimization}
\kwd{heavy-tailed data}
\kwd{robust regression}
\kwd{robust $k$-means clustering}
\kwd{Catoni's estimator}
\end{keyword}
\end{frontmatter}

%s1 #&#
\section{Introduction}

Heavy-tailed data are commonly encountered in many fields of research
(see, e.g., Embrechts, Kl\"uppelberg and Mikosch \cite{Emb97} and
Finkenstadt and Rootz\'en \cite{Fin03}).
For instance, in finance, the influential work of Mandelbrot \cite
{Man63} and Fama \cite{Fam63} documented evidence of power-law
behavior in asset prices in the early 1960s.
When the data have heavy tails, standard statistical procedures
typically perform poorly and
appropriate robust alternatives are needed to carry out inference effectively.
In this paper, we propose a class of robust empirical risk minimization
procedures for such data that are based on a robust estimator
introduced by Catoni \cite{Cat10}.

Empirical risk minimization is one of the basic principles of
statistical learning that is
routinely applied in a great variety of problems such as regression
function estimation, classification and clustering.
The general model may be described as follows.
Let $X$ be a random variable taking values in some measurable space
$\mathcal{X}$ and let $\F$ be a set of nonnegative functions
defined on $\X$.
For each $f \in\F$, define the \textit{risk} $m_f=\EXP f(X)$ and let
$m^*=\inf_{f\in\F} m_f$ denote the optimal risk.
In statistical learning, $n$ independent random variables
$X_1,\ldots,X_n$ are available, all distributed as $X$, and
one aims at finding a function with small risk. To this end,
one may define the \textit{empirical risk minimizer}
\[
f_{\mathrm{ERM}}= \argmin_{f\in\F} \frac{1}{n} \sum
_{i=1}^n f(X_i) ,
\]
where, for the simplicity of the discussion and essentially
without loss of generality,
we implicitly assume that the minimizer exists.
If the minimum is achieved by more than one function, one may
pick one of them arbitrarily.

\begin{rem*}[(Loss functions and risks)]
The main motivation and terminology may be explained by
the following general prediction problem in statistical learning.
Let the ``training data'' $(Z_1,Y_1),\ldots,(Z_n,Y_n)$ be independent
identically
distributed pairs of random variables
where the $Z_i$ take their
values in, say, $\R^m$ and the $Y_i$ are real-valued.
In classification problems, the $Y_i$ take discrete values.
Given a new observation $Z$, one is interested in predicting the
value of the corresponding response variable $Y$ where the pair
$(Z,Y)$ has the same distribution as that of the $(Z_i,Y_i)$.
A predictor is a function $g:\R^m \to\R$ whose quality is measured
with the help of a \textit{loss function} $\ell: \R\times\R\to\R_+$.
The \textit{risk} of $g$ is then $\EXP\ell(g(Z),Y)$.
Given a class $\G$ of functions $g:\R^m \to\R$, empirical risk
minimization chooses one that minimizes the \textit{empirical risk}
$(1/n) \sum_{i=1}^n \ell(g(Z_i),Y_i)$ over all $g\in\G$.
In the simplified notation followed in this paper, $X_i$ corresponds to
the pair $(Z_i,Y_i)$, the function $f$ represents $\ell(g(\cdot
),\cdot)$
and $m_f$ substitutes $\EXP\ell(g(Z),Y)$.
\end{rem*}

The performance of empirical risk minimization is measured by the
\textit{risk} of the selected function,
\[
m_{\mathrm{ERM}}= \EXP\bigl[ f_{\mathrm{ERM}}(X) \mid X_1,
\ldots,X_n \bigr].
\]
In particular, the main object of interest for this paper is
the \textit{excess risk} $m_{\mathrm{ERM}}-m^*$.
The performance of empirical risk minimization has been
thoroughly studied and well understood using tools of empirical process
theory. In particular, the simple observation that
\[
m_{\mathrm{ERM}}-m^* \le2\sup_{f\in\F} \Biggl\llvert
\frac{1}{n}\sum_{i=1}^n
f(X_i)-m_f\Biggr\rrvert,
\]
allows one to apply the rich theory on the suprema of empirical
processes to obtain upper performance bounds.
The interested reader is referred to
Bartlett and Mendelson \cite{BaMe06},
Boucheron, Bousquet and Lugosi \cite{BoBoLu05},
Koltchinskii \cite{Kol06},
Massart \cite{Mas06},
Mendelson \cite{Men14},
van de Geer \cite{Geer00}
for references and recent results in this area.
Essentially all of the theory of empirical minimization
assumes either that the functions $f$ are uniformly bounded
or that the random variables $f(X)$ have sub-Gaussian tails for
all $f\in\F$. For example, when all $f\in\F$ take their values
in the interval $[0,1]$, Dudley's \cite{Dud78}
classical metric-entropy bound, together with standard symmetrization
arguments, imply that there exists a
universal constant $c$ such that
%
%e1 #&#
\begin{equation}
\label{eqdudley1} \EXP m_{\mathrm{ERM}}-m^* \le\frac{c}{\sqrt{n}} \EXP
\int
_0^1 \sqrt{\log N_{\mathbb{X}}(\F,
\epsilon)}\,d\epsilon,
\end{equation}
where for any $\epsilon>0$,
$N_{\mathbb{X}}(\F,\epsilon)$ is the {\sl$\epsilon$-covering number}
of the class $\F$ under the empirical quadratic distance
$d_{\mathbb{X}}(f,g)= (\frac{1}{n}\sum_{i=1}^n
(f(X_i)-g(X_i))^2 )^{1/2}$,
defined as the minimal cardinality
$N$ of any set $\{f_1,\ldots,f_N\}\subset\F$ such that for all
$f\in\F$ there exists an $f_j\in\{f_1,\ldots,f_N\}$ with
$d_{\mathbb{X}}(f,f_j)\le\epsilon$.
Of course, this is one of the most basic bounds and many important
refinements have been established.

A tighter bound may be established by the so-called \textit{generic chaining}
method; see Talagrand \cite{talagrand}.
Recall the following definition (see, e.g., \cite{talagrand}, Definition~1.2.3).
Let $T$ be a (pseudo) metric space. An increasing sequence
$(\mathcal{A}_n)$ of partitions of $T$ is called \textit{admissible} if
for all $n=0,1,2,\ldots, \#\mathcal{A}_n \le2^{2^n}$.
For any $t\in T$, denote by $A_n(t)$ the unique element
of $\mathcal{A}_n$ that contains $t$.
Let $\Delta(A)$ denote the diameter of the set $A\subset T$.
Define, for $\beta=1,2$,
\[
\gamma_{\beta}(T,d) =\inf_{\mathcal{A}_n}\sup
_{t \in T} \sum_{n \ge0} 2^{n/\beta
}
\Delta\bigl(A_n(t)\bigr),
\]
%
% \emi{I have changed the \alpha into a \beta in the previous
%definition of gamma.}
where the infimum is taken over all admissible sequences.
Then one has
%
%e2 #&#
\begin{equation}
\label{eqclassical} \EXP m_{\mathrm{ERM}}-m^* \le\frac{c}{\sqrt{n}}\EXP
\gamma_2(\F,d_{\mathbb{X}}) ,
\end{equation}
for some universal constant $c$. This bound implies
(\ref{eqdudley1}) as
$\gamma_2(\F,d_{\mathbb{X}})$ is bounded by a constant multiple of
the entropy integral
$\int_0^1 \sqrt{\log N_{\mathbb{X}}(\F,\epsilon)}\,d\epsilon$ (see,
e.g., \cite{talagrand}).

However, when the functions $f$ are no longer uniformly bounded and
the random variables $f(X)$ may have a heavy tail,
empirical risk minimization may have a much poorer performance.
This is simply due to the fact that empirical averages become
poor estimates of expected values. Indeed, for heavy-tailed distributions,
several estimators of the mean are known to outperform simple empirical
averages. It is a natural idea to define
a robust version of empirical risk minimization based on minimizing
such robust estimators.

In this paper, we focus on an elegant and powerful estimator
proposed and analyzed by Catoni \cite{Cat10}. (A version of)
Catoni's estimator may be defined as follows.

Introduce the nondecreasing differentiable \textit{truncation function}
%
%e3 #&#
\begin{equation}
\label{eqtrunc} \phi(x) = -\mathbh{1}_{\{ x<0 \}}\log\biggl(1-x+
\frac{x^2}{2} \biggr) +\mathbh{1}_{\{ x\ge0 \}}\log\biggl(1+x+
\frac{x^2}{2} \biggr).
\end{equation}
To estimate $m_f=\EXP f(X)$ for some $f\in\F$, define for
all $\mu\in\R$,
\[
\wh{r}_f(\mu) = \frac{1}{n\alpha} \sum
_{i=1}^n\phi\bigl(\alpha\bigl(f(X_i)-
\mu\bigr)\bigr),
\]
where $\alpha>0$ is a parameter of the estimator to be specified below.
Catoni's estimator of $m_f$ is defined as the
unique value $\wh{\mu}_f$ for which
$\wh{r}_f (\wh{\mu}_f)=0$.
[Uniqueness is ensured by the strict monotonicity of $ \mu\mapsto\wh
{r}_f(\mu) $.]
Catoni proves that for any fixed $f\in\F$ and $\delta\in[0,1]$
such that $n>2\log(1/\delta)$, under the only
assumption that $\operatorname{Var} (f(X) ) \le v$, the
estimator above with
\[
\alpha= \sqrt{\frac{2\log(1/\delta)}{
n (v+\sklafrac{2v\log(1/\delta)}{n(1-(2/n)\log(1/\delta))} )}}
\]
satisfies that, with probability at least $1-2\delta$,
%
%e4 #&#
\begin{equation}
\label{eqcatoni0} \llvert m_f-\wh{\mu}_f\rrvert\le\sqrt{
\frac{2v\log
(1/\delta
)}{n(1-(2/n)\log(1/\delta))}}.
\end{equation}
In other words, the deviations of the estimate exhibit a sub-Gaussian
behavior. The price to pay is that the estimator depends both on
the upper bound $v$ for the variance and on
the prescribed confidence $\delta$ via the parameter $\alpha$.
%Also note that Catoni's bound does not say anything about
%the expected deviation $\EXP\left\vert m_f-\wh{\mu}_f\right\vert$.

Catoni also shows that for any $n>4(1+\log(1/\delta))$, if
$\operatorname{Var} (f(X) ) \le v$, the choice
\[
\alpha= \sqrt{\frac{2}{nv}}
\]
guarantees that, with probability at least $1-2\delta$,
%
%e5 #&#
\begin{equation}
\label{eqcatoni} \llvert m_f-\wh{\mu}_f\rrvert\le
\bigl(1+\log(1/\delta) \bigr)\sqrt{\frac{v}{n}}.
\end{equation}
Even though we lose the sub-Gaussian tail behavior, the estimator
is independent of the required confidence level.

Given such a powerful mean estimator, it is natural to propose
an empirical risk minimizer that
selects a function from the class $\F$ that minimizes Catoni's
mean estimator.
Formally, define
\[
\wh{f}=\argmin_{f \in\F} \wh{\mu}_f,
\]
where again, for the sake of simplicity we assume that the minimizer
exists. (Otherwise one may select an appropriate approximate minimizer
and all arguments go through in a trivial way.)

Once again, as a first step of understanding the excess risk
$m_{\wh{f}}-m^*$, we may use the simple bound
\[
m_{\wh{f}}-m^* = (m_{\wh{f}}- \wh{\mu}_{\wh{f}} ) + \bigl(
\wh{\mu}_{\wh{f}}- m^* \bigr) \le2\sup_{f\in\F} \llvert
m_f-\wh{\mu}_f\rrvert.
\]
When $\F$ is a finite class of cardinality, say $\llvert \F\rrvert =N$,
Catoni's bound may be combined, in a straightforward way,
with the union-of-events bound. Indeed, if the estimators $\wh{\mu}_f$
are defined with parameter
\[
\alpha= \sqrt{\frac{2\log(N/\delta)}{
n (v+\sklafrac{2v\log(N/\delta)}{n(1-(2/n)\log(N/\delta))} )}},
\]
then, with probability at least $1-2\delta$,
\[
\sup_{f\in\F} \llvert m_f-\wh{\mu}_f
\rrvert\le\sqrt{\frac{2v\log(N/\delta)}{n(1-(2/n)\log(N/\delta))}}.
\]
Note that this bound requires that $\sup_{f\in\F} \operatorname
{Var} (f(X) ) \le v$,
that is, the variances are uniformly bounded by a \textit{known} value $v$.
Throughout the paper, we work with this assumption.
However, this bound does not take into account the structure of the
class $\F$ and it is useless when $\F$ is an infinite class.
Our strategy to obtain meaningful bounds is to use
\textit{chaining} arguments. However, the extension is nontrivial and
the argument becomes more involved. The main results of the
paper present performance bounds for empirical minimization
of Catoni's estimator based on generic chaining.

\begin{rem*}[(Median-of-means estimator)]
Catoni's estimator is not the only one with sub-Gaussian deviations
for heavy-tailed distributions. Indeed, the \textit{median-of-means} estimator,
proposed by Nemirovsky and Yudin \cite{NeYu83} (and also independently by
Alon, Matias and Szegedy \cite{AMS02}) has similar performance guarantees
as (\ref{eqcatoni0}). This estimate is obtained by dividing the
data in several small blocks, calculating the sample mean within
each block, and then taking the median of these means.
Hsu and Sabato \cite{HsuSa13} and Minsker \cite{Min13}
introduce multivariate generalizations of the median-of-means
estimator and use it to define and analyze certain statistical
learning procedures in the presence of heavy-tailed data.
The sub-Gaussian behavior is
achieved under various assumptions on the
loss function.
Such conditions can be avoided here.
As an example, we detail applications of our results in Section~\ref
{secapps} for three different examples of loss functions.
An important advantage of the median-of-means estimate over Catoni's
estimate is that
the parameter of the estimate (i.e., the number of blocks) only depends
on the confidence level $\delta$ but not on $v$ and, therefore, no prior
upper bound of the variance $v$ is required to compute this estimate.
Also, the median-of-means estimate is useful even when the variance
is infinite and only a moment of order $1+\epsilon$ exists for some
$\epsilon>0$ (see Bubeck, Cesa-Bianchi and Lugosi \cite{BuCeLu13}).
Lerasle and Oliveira \cite{LeOl12} consider
empirical minimization of the median-of-means
estimator and obtain interesting results in various statistical
learning problems.
However, to establish
metric-entropy bounds for minimization of this mean estimate
remains to be a challenge.
\end{rem*}

%G Not true:
%% \remark{\sc(computation.)}
%% An important feature of empirical minimization based on Catoni's
%% estimate is that the minimizer can be computed efficiently in many
%cases.
%% Indeed, if $\Phi:\R\to\R$ denotes a
%% strictly convex differentiable function such that $\Phi'=\phi$ then
%% Catoni's estimator equals
%% \[
%% \wh{\mu}_f
%% = \argmin_{\mu} \frac{1}{n\alpha} \sum_{i=1}^n\Phi(\alpha(f(X_i)-
%\mu)).
%% \]
%% Thus, when the class $\F$ is convex, finding the minimum of
%% $\wh{\mu}_f$ over $f\in\F$ is equivalent to minimizing the convex
%% function
%% \[
%% h(f,\mu)= \frac{1}{n\alpha} \sum_{i=1}^n\Phi(\alpha(f(X_i)-\mu))
%% \]
%% in $f$ and $\mu$ which is
%% a convex optimization problem for the pair $(f,\mu)\in\F\times\R$.
%% Such convex optimization problems may be solved efficiently in many
%% interesting cases.

The rest of the paper is organized as follows. In Section~\ref{secresults},
we state and discuss the main results of the paper. Section~\ref{secproofs}
is dedicated to the proofs.
In Section~\ref{secapps}, we describe some applications
to regression under the absolute and squared losses and $k$-means
clustering.
Finally, in Section~\ref{secsim} we present some simulation results
both for
regression and $k$-means clustering.
The simulation study gives empirical evidence that the proposed
empirical risk minimization procedure improves performance in a
significant manner in the presence of heavy-tailed data.
Some of the more technical arguments are relegated to the \hyperref[app]{Appendix}.

%s2 #&#
\section{Main results}
\label{secresults}

The bounds we establish for the excess risk
depend on the geometric structure of the
class $\F$ under different distances. The $L_2(P)$ distance
is defined, for $f,f'\in\F$, by
\[
d\bigl(f,f'\bigr)= \bigl(\EXP\bigl[ \bigl(f(X)-f'(X)
\bigr)^2 \bigr] \bigr)^{1/2}
\]
and the $L_{\infty}$ distance is
\[
D\bigl(f,f'\bigr) = \sup_{x\in\X}\bigl\llvert
f(x)-f'(x)\bigr\rrvert.
\]
We also work with the (random) empirical quadratic distance
\[
d_{\mathbb{X}}\bigl(f,f'\bigr) = \Biggl(\frac{1}{n}\sum
_{i=1}^n \bigl(f(X_i)-f'(X_i)
\bigr)^2 \Biggr)^{1/2}.
\]
Denote by $f^*$ a function with minimal expectation
\begin{eqnarray*}
&&f^*=\mathop{\argmin}_{f \in\F} m_f.
\end{eqnarray*}

Next, we present two results that bound the excess risk
$m_{\hf}-m_{f^*}$ of the minimizer $\hf$ of Catoni's risk estimate in
terms of metric properties of the class $\F$.
The first result involves a combination of terms involving the $\gamma_2$
and $\gamma_1$
functionals under the metrics $d$ and $D$ while the second is in terms
of quantiles of $\gamma_2$ under the empirical metric $d_{\mathbb{X}}$.

%\emi{In the two theorems the constant L and K are now just universal
%constants.}

%th1 #&#
\begin{theorem}
\label{teomain}
Let $\F$ be a class of nonnegative functions defined on
a set $\X$ and let $X,X_1,\ldots,X_n$ be i.i.d. random
variables taking values in $\X$.
Assume that there exists $v>0$
such that $\sup_{f\in\F} \operatorname{Var} (f(X) )\le v$.
Let $\delta\in(0,1/3)$.
Suppose that $\wh{f}$ is selected from $\F$ by minimizing Catoni's
mean estimator with parameter $\alpha$.
Then there exists a universal constant $L$ such that, under the condition
\[
6 \biggl(\alpha v+\frac{2\log(\delta^{-1})}{n\alpha} \biggr)+ L\log
\bigl(2\delta^{-1}
\bigr) \biggl(\frac{\gamma_2(\F,d)}{\sqrt{n}} +\frac{\gamma_1(\F,D)}{n}
\biggr) \le
\frac{1}{\alpha},
\]
%
% \emi{I added the condition and erased the alpha large}
with probability at least $1-3\delta$, the risk of $\wh{f}$
satisfies
\[
m_{\hf}-m_{f^*} \le6 \biggl(\alpha v+\frac{2\log(\delta
^{-1})}{n\alpha}
\biggr)+L\log\bigl(2\delta^{-1}\bigr) \biggl(\frac{\gamma
_2(\F,d)}{\sqrt{n}} +
\frac{\gamma_1(\F,D)}{n} \biggr).
\]
\end{theorem}

%th2 #&#
\begin{theorem}
\label{teoalternative}
Assume the hypotheses of Theorem \ref{teomain}. We denote by\break $
\operatorname{diam}_d(\F) $ the diameter of the class $\F$ under the
distance $d$.
Set $ \Gamma_{\delta} $ such that $ \mathbb{P} \{\gamma_2(\F
,d_{\mathbb{X}}) > \Gamma_{\delta} \} \le\frac{\delta}{8}$.
Then there exists a universal constant $K$ such that, under the condition
\[
6 \biggl(\alpha v+\frac{2\log(\delta^{-1})}{n\alpha} \biggr)+ K\max
\bigl(\Gamma_{\delta},
\operatorname{diam}_d(\F)\bigr)\sqrt{\frac{\log
(\sfrac{8}{\delta})}{n}} \le
\frac{1}{\alpha},
\]
%
% \emi{Same thing here}
with probability at least $1-3\delta$, the risk of $\wh{f}$
satisfies
\[
m_{\hf}-m_{f^*} \le6 \biggl(\alpha v+\frac{2\log(\delta
^{-1})}{n\alpha}
\biggr)+K\max\bigl(\Gamma_{\delta},\operatorname{diam}_d(\F)
\bigr)\sqrt{\frac{\log(\sfrac{8}{\delta})}{n}}.
\]
\end{theorem}

In both theorems above, the choice of $\alpha$ only influences the
term $\alpha v+2\log(\delta^{-1})/(n\alpha)$. By taking
$\alpha=\sqrt{2\log(\delta^{-1})/(nv)}$, this term equals
\[
2\sqrt{\frac{2v\log(\delta^{-1})}{n}}.
\]
For example, in that case, the condition in Theorem \ref{teomain}
reduces to
\[
12\sqrt{\frac{2v\log(\delta^{-1})}{n}}+L\log\bigl(\delta^{-1}\bigr)
\biggl(
\frac{\gamma_2(\F,d)}{\sqrt{n}} +\frac{\gamma_1(\F,D)}{n} \biggr) \le
\sqrt{\frac{nv}{2\log
(\delta^{-1})}}.
\]
This holds for sufficiently large values of $n$.
%\emi{Added a comment on the choice of \alpha}
This choice has the disadvantage that the estimator depends on the
confidence level (i.e., on the value of $\delta$).
By taking $\alpha=\sqrt{2/(nv)}$, independently of $\delta$, one
obtains the slightly worse term
\[
\sqrt{\frac{2v}{n}}\bigl(1+\log\bigl(\delta^{-1}\bigr)\bigr).
\]
%
%\emi{I erased the sentence : In those two cases, $\alpha\underset{n
%\to+\infty}{\longrightarrow} 0$ which validates the hypothesis of the
%theorems.}
Observe that the main term in the second part of the bound of Theorem
\ref{teomain} is
\[
L\log\bigl(\delta^{-1}\bigr)\frac{\gamma_2(\F,d)}{\sqrt{n}}
\]
which is comparable to the bound (\ref{eqclassical}) obtained under
the strong condition of $f(X)$ being uniformly bounded.
All other terms are of smaller order.
Note that
this part of the bound depends on the ``weak'' distribution-dependent
$L_2(P)$ metric $d$. The quantity $\gamma_1(\F,D) \ge\gamma_2(\F
,d)$ also enters the
bound of Theorem \ref{teomain} though only multiplied by $1/n$. The
presence of this term
requires that $\F$ be bounded in the $L_{\infty}$ distance $D$ which
limits the usefulness of the bound.
In Section~\ref{secapps}, we illustrate the bounds on two applications
to regression and $k$-means clustering. In these applications, in spite
of the presence of heavy tails, the covering numbers under the distance $D$
may be bounded in a meaningful way.
Note that no such bound can hold
for ``ordinary'' empirical risk minimization that minimizes the
usual empirical means $(1/n)\sum_{i=1}^n f(X_i)$ because of the
poor performance of empirical averages in the presence of heavy
tails.

The main merit of the bound of Theorem \ref{teoalternative} is that it
does not require that the class $\F$ has a finite diameter under the
supremum norm. Instead, the quantiles of $\gamma_2(\F,d_{\mathbb{X}})$
enter the picture. In Section~\ref{secapps}, we show through
the example of $L_2$ regression how these quantiles may be estimated.

%s3 #&#
\section{Proofs}
\label{secproofs}

The proofs of Theorems \ref{teomain} and \ref{teoalternative}
are based on showing that the excess risk can be bounded as soon
as the supremum of the empirical process
$\{X_f(\mu): f\in\F\}$ is bounded for any
fixed $\mu\in\R$, where for any $f\in\F$ and $\mu\in\R$, we define
$X_f(\mu)=\hr_f(\mu)-\br_f(\mu)$ with
\[
\ol{r}_f (\mu) = \frac{1}{\alpha} \mathbb{E} \bigl[\phi\bigl(
\alpha\bigl(f(X)-\mu\bigr)\bigr) \bigr]
\]
and
\[
\wh{r}_f(\mu) = \frac{1}{n\alpha} \sum
_{i=1}^n\phi\bigl(\alpha\bigl(f(X_i)-
\mu\bigr)\bigr).
\]
The two theorems differ in the way the supremum of this empirical
process is bounded.

%Note first that, by the definition of Catoni's estimator,
%$\wh{\mu}_f\ge\inf_{i} f(X_i)$. In particular, $\wh{\mu}_f\ge0$
%for all $f\in\F$.
Let $A_{\alpha}(\delta)=\alpha v+2\log(\delta^{-1})/(n\alpha)$.

Once again, we may assume, essentially without loss of generality, that the
minimum exists. In case of multiple minimizers, we may choose one
arbitrarily.
The main result in \cite{Cat10} states that for any $\delta>0$ such that
$\alpha^2v + 2\log(\delta^{-1})/n \le1$,
with probability at least $1-2\delta$,
%
%e6 #&#
\begin{equation}
\label{eq-cat} \llvert\wh{\mu}_{f^*}- m_{f^*}\rrvert\le
A_{\alpha
}(\delta).
\end{equation}
%
%\emi{This is the new location of the definition of $\Omega_{f^*}(
%\delta) $. }
Let $\Omega_{f^*}(\delta) $ be the event on which inequality (\ref
{eq-cat}) holds. By definition,\break $\mathbb{P} \{\Omega
_{f^*}(\delta) \} \ge1-2\delta$.

%s3.1 #&#
\subsection{A deterministic version of \texorpdfstring{$\wh{\mu}_f$}{$widehat{mu}_f$}}
We begin with a variant of the argument of Catoni \cite{Cat10}.
It involves a deterministic version $\ol{\mu}_f$ of the estimator defined,
for each $f\in\F$, as the unique solution of the equation
$\ol{r}_f (\mu) = 0$.

In Lemma \ref{lembr} below, we show that $\ol{\mu}_f$ is
in a small (deterministic) interval centered at $m_f$.
For any $f\in\F$, $\mu\in\R$, and $\ep\ge0$, define
\begin{eqnarray*}
B_f^+(\mu,\ep)&=&(m_f-\mu)+\frac{\alpha}{2}(m_f-
\mu)^2+\frac
{\alpha}{2}v+\ep,
\\
B_f^-(\mu,\ep)&=&(m_f-\mu)-\frac{\alpha}{2}(m_f-
\mu)^2-\frac
{\alpha}{2}v-\ep
\end{eqnarray*}
and let
\[
\mu_f^+(\ep)=m_f+\alpha v+2\ep, \qquad
\mu_f^-(\ep)=m_f-\alpha v-2\ep.
\]
As a function of $\mu$, $B_f^+(\mu,\ep)$ is a quadratic polynomial
such that $\mu_f^+(\ep)$ is an upper bound of the smallest root of
$B_f^+(\mu,\ep)$. Similarly, $\mu_f^-(\ep)$ is a lower bound of the
largest root of $B_f^-(\mu,\ep)$. Implicitly, we assumed that these
roots always exist. This is not always the case but a simple condition
on $\alpha$ guarantees that these roots exists.
In particular,
$ 1-\alpha^2v-2\alpha\ep\ge0 $ guarantees that $ B_f^+(\mu,\ep)=0
$ and $ B_f^-(\mu,\ep)=0 $ have at least one solution. This condition
will always
be satisfied by our choice of $\epsilon$ and $\alpha$.

%% $\ep$ can be seen as the error made compared to the complete
%deterministic case. As we will see below, for Catoni's estimator this
%error is given by $ \frac{\log\delta^{-1}}{n\alpha} $ and by $0$ for
%the deterministic quantity $\ol{r}_f(\mu)$.

%\emi{I deleted the following proposition}

%In our notation, Proposition 2.2 in \cite{Cat10} is equivalent to the
%following.
%\begin{prop}
%\label{prop-cat}
%Let $\delta>0$ and $\mu\in\R$. For any $f\in\F$, the events
%\begin{eqnarray*}
%\Omega_{f}^-(\mu,\delta)&=&\left\{ B_f^-\left(\mu,\frac{\log
%\delta^{-1}}{n\alpha}\right)\le\hr_f(\mu)\right\}\\
%\Omega_{f}^+(\mu,\delta)&=&\left\{ \hr_f(\mu) \le B_f^+\left(\mu,\frac{
%\log\delta^{-1}}{n\alpha}\right)\right\}
%\end{eqnarray*}
%both hold with probability at least $1-\delta$.
%\end{prop}
%
%Let $\ep=\frac{\log\delta^{-1}}{n\alpha}$ and define
%\[
%\Omega_{f^*}(\delta)\overset{\mbox{def}}{=}\Omega_{f}^-(\mu^-_{f^*}(
%\ep),\delta) \cap\Omega_{f}^+(\mu^+_{f^*}(\ep),\delta).
%\]
%If $\alpha^2v +\frac{2\log\delta^{-1}}{n} \le1$, (\ref{eq-cat})
%holds on the event $\Omega_{f^*}(\delta)$. (Just replace $\ep$ by $
%\frac{\log\delta^{-1}}{n\alpha}$ in the expression of $\mu^+_{f^*}(
%\ep)$ and $\mu^-_{f^*}(\ep)$.)
%Since $\wh{\mu}_{f^*}$ is the unique zero of $ \wh{r}_{f^*}(\mu) $, it
%is squeezed into the interval $ [\mu^-_{f^*}(\ep) , \mu^+_{f^*}(
%\ep)] $ centered at $m_{f^*}$ and of size $2A_{\alpha}(\delta)$. Note
%that $\prob{\Omega_{f^*}(\delta)} \ge1-2\delta$.

Still following the ideas of \cite{Cat10}, the next lemma bounds $\ol
{r}_f (\mu)$
by the quadratic polynomials $B^+$ and $B^-$. The lemma will help us
compare the zero of $\ol{r}_f (\mu)$ to the zeros of these quadratic
functions.

%le3 #&#
\begin{lem}
\label{lembr}
For any fixed $f\in\F$ and $\mu\in\R$,
%
%e7 #&#
\begin{equation}
\label{eq-det} B_f^-(\mu,0)\le\br_f(\mu) \le
B_f^+(\mu,0),
\end{equation}
and, therefore, $m_f-\alpha v\le\ol{\mu}_f \le m_f+\alpha v$.
In particular,
\[
\label{eq-br} B_{\hf}^-(\mu,0)\le\br_{\hf}(\mu) \le
B_{\hf}^+(\mu,0).
\]
For any $\mu$ and $\ep$, such that $\br_{\hf}(\mu) \le\ep$, if $
1-\alpha^2v-2\alpha\ep\ge0 $, then
%
%e8 #&#
\begin{equation}
\label{eq-mf} m_{\hf} \le\mu+\alpha v+ 2\ep.
\end{equation}
\end{lem}

\begin{pf}
%
%f1 #&#
\begin{figure}[b]

\includegraphics{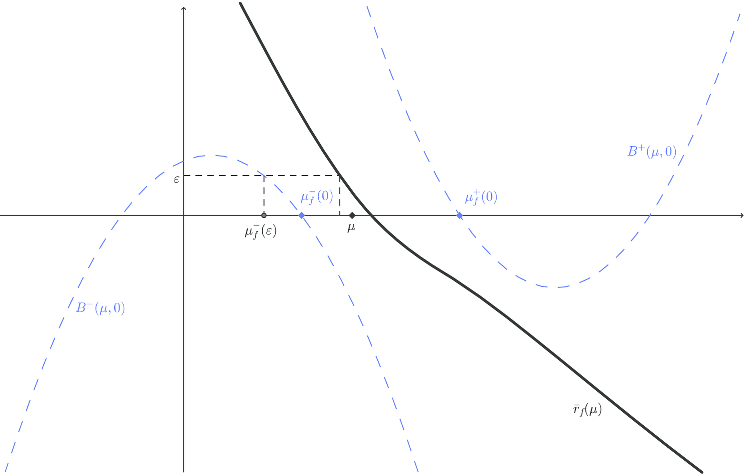}

\caption{Representation of $ \br_f(\mu) $ and the quadratic
functions $ B_f^-(\mu,0)$ and $B_f^+(\mu,0)$.
$ \br_f(\mu) $ is squeezed between $ B_f^-(\mu,0)$ and $B_f^+(\mu,0)
$. In particular at $ \mu^+_f(0) $ [resp., $ \mu^-_f(0)$], $\br
_f(\mu)$ is nonpositive (resp., nonnegative). Any $\mu$ such that $\br_f(\mu
) \le\ep$ is above $\mu_f^{-}(\ep)$.}\label{figBf}
\end{figure}
%
%\emi{Added the figure to explain the proof.}
Writing $Y$ for $\alpha(f(X)-\mu)$ and using the fact that
$\phi(x)\le\log(1+x+x^2/2)$ for all $x\in\R$,
\begin{eqnarray*}
\exp\bigl(\alpha\ol{r}_f (\mu) \bigr) & \le& \exp\biggl( \EXP
\biggl[\log\biggl(1+Y+\frac{Y^2}{2}\biggr) \biggr] \biggr)
\\
& \le& \EXP\biggl[1+Y+\frac{Y^2}{2} \biggr]
\\
& \le& 1+\alpha(m_f-\mu) + \frac{\alpha^2}{2} \bigl[v+(m_f-
\mu)^2 \bigr]
 \le \exp\bigl(\alpha B_f^+ (\mu,0) \bigr).
\end{eqnarray*}
Thus, we have $\ol{r}_f (\mu) - B_f^+(\mu,0) \le0$ (see Figure~\ref
{figBf}). Since this last inequality is true for any $f$,
$\sup_f(\ol{r}_{f} (\mu) - B_{f}^+(\mu,0)) \le0$ and the second
inequality of (\ref{eq-det}) is proved. The second statement of the
lemma may be proved by a similar argument.

If $\br_{\hf}(\mu) \le\ep$, then $B_{\hf}^-(\mu,0)\le\ep$
which is equivalent to $B_{\hf}^-(\mu,\ep)\le0$. If $ 1-\alpha
^2v-2\alpha\ep\ge0 $ then a solution of $B_{\hf}^-(\mu,\ep)= 0$
exists and since $\br_{\hf}(\mu)$ is a nonincreasing function, $\mu
$ is above the largest of these two solutions. This implies $\mu
^{-}_{\hf}(\ep)\le\mu$ which gives inequality (\ref{eq-mf}) (see
Figure~\ref{figBf}).
\end{pf}

Inequality (\ref{eq-mf}) is the key tool to ensure that the risk
$m_{\hf}$ of the minimizer $\hf$ can be upper bounded as soon as
$\overline{r}_{\hf}$ is.
It remains to find the smallest $\mu$ and $\ep$ such that $\overline
{r}_{f}(\mu)$ is bounded uniformly on $\F$.

%s3.2 #&#
\subsection{Bounding the excess risk in terms of the supremum of an
empirical process}

%% The last remaining challenge remains to choose a $\mu_0$ where we
%will use Theorem \ref{teo-controlrf} and then to show that $\ol{r}_{
%\hf}(\mu_0)$ is small enough at this chosen point. The last step will
%be given by equation (\ref{eq-mf}) for this $\mu_0$ cleverly chosen.
%The ideal choice would be $\mu_0=\wh{\mu}_{\hf}$. Nevertheless, we
%recall that any choice of $\mu_0$ depending on the data won't allow us
%to use Theorem \ref{teo-controlrf}. We will then use the best known
%deterministic upper bound for $\wh{\mu}_{\hf}$ on the event $
%\Omega_{f^*}(\delta) $:

The key to all proofs is that we link the excess risk to the supremum
of the
empirical process $X_f(\mu)=\hr_f(\mu)-\br_f(\mu)$ as $f$ ranges
through $\F$ for a suitably
chosen value of $\mu$. For fixed $\mu\in\R$ and $\delta\in(0,1)$,
define the $1-\delta$ quantile of $\sup_{f\in\F}\llvert X_f(\mu
)-X_{f^*}(\mu)\rrvert $
by $Q(\mu,\delta)$, that is, the infimum of all positive
numbers $q$ such that
\[
\PROB\Bigl\{\sup_{f\in\F}\bigl\llvert X_f(
\mu)-X_{f^*}(\mu)\bigr\rrvert\le q \Bigr\} \ge1-\delta.
\]
First, we need a few simple facts summarized in the next lemma.

%le4 #&#
\begin{lem}
\label{lemmu0}
Let $\mu_0=m_{f^*}+A_{\alpha}(\delta)$. Then on the event $\Omega
_{f^*}(\delta)$, the following inequalities hold:
\begin{longlist}[2.]
\item[1.]$ \hr_{\hf}(\mu_0)\le0 $;
\item[2.]$\br_{f^*}(\mu_0) \le0$;
\item[3.]$-\hr_{f^*}(\mu_0) \le2A_{\alpha}(\delta)$.
\end{longlist}
\end{lem}

\begin{pf}
We prove each inequality separately.
\begin{longlist}[3.]
\item[1.] First, note that on $ \Omega_{f^*}(\delta)$ inequality (\ref
{eq-cat}) holds, and
we have $\wh{\mu}_{\hf} \le\wh{\mu}_{f^*}\le\mu_0 $. Since $\hr
_{\hf}$ is a nonincreasing function of $\mu$,
$\hr_{\hf}(\mu_0)\le\hr_{\hf}(\wh{\mu}_{\hf})=0$.
%\emi{The previous item is clearer}

\item[2.] By (\ref{eq-det}), $\wb{\mu}_{f^*} \le m_{f^*}+\alpha v\le
m_{f^*}+\alpha v +2\log(\delta^{-1})/(n \alpha)=\mu_0$. Since $\br
_{f^*} $ is a nonincreasing function, $\br_{f^*}(\mu_0) \le\br
_{f^*}(\wb{\mu}_{f^*}) = 0$.

\item[3.]$\hr_{f^*}$ is a $1$-Lipschitz function and, therefore,
\begin{eqnarray*}
\bigl\llvert\hr_{f^*}(\mu_0)\bigr\rrvert&=&\bigl\llvert
\hr_{f^*}(\wh{\mu}_{f^*})-\hr_{f^*}(\mu
_0)\bigr\rrvert \le \llvert\wh{\mu}_{f^*}-
\mu_0\rrvert
\\
&\le&\llvert\wh{\mu}_{f^*}-m_{f^*}\rrvert+\llvert
m_{f^*}-\mu_0\rrvert
\\
&\le&2A_{\alpha}(\delta)
\end{eqnarray*}
which gives $-\hr_{f^*}(\mu_0) \le2A_{\alpha}(\delta)$.\quad\qed
\end{longlist}\noqed
\end{pf}

We will use Lemma \ref{lembr} with $\mu_0$ introduced in Lemma \ref
{lemmu0}. Recall that $\mathbb{P} \{\Omega_{f^*}(\delta)
\} \ge1-2\delta$.

With the notation introduced above, we see that with probability at
least $1-\delta$,
\begin{eqnarray*}
\br_{\hf}(\mu_0) &\le& \hr_{\hf}(
\mu_0)+\br_{f^*}(\mu_0)-\hr_{f^*}(
\mu_0)
\\
&&{} +\bigl\llvert\br_{\hf}(\mu_0)-
\hr_{\hf}(\mu_0)-\br_{f^*}(\mu_0)+
\hr_{f^*}(\mu_0)\bigr\rrvert
\\
&\le& \hr_{\hf}(\mu_0)+\br_{f^*}(
\mu_0)-\hr_{f^*}(\mu_0)
\\
&&{} +\sup
_{f\in\F}\bigl\llvert\br_{f}(\mu_0)-
\hr_{f}(\mu_0)-\br_{f^*}(\mu_0)+
\hr_{f^*}(\mu_0)\bigr\rrvert
\\
&\le& \hr_{\hf}(\mu_0)+\br_{f^*}(
\mu_0)-\hr_{f^*}(\mu_0) +Q(\mu_0,
\delta).
\end{eqnarray*}
This inequality, together with Lemma \ref{lemmu0}, implies that, with
probability at least $1-3\delta$,
\[
\br_{\hf}(\mu_0) \le2A_{\alpha}(\delta)+Q(
\mu_0,\delta).
\]
Now using Lemma \ref{lembr} with $ \ep= 2A_{\alpha}(\delta)+Q(\mu
_0,\delta) $ and under the condition $ 1-\alpha^2v-4\alpha A_{\alpha
}(\delta)-2\alpha Q(\mu_0,\delta) \ge0 $, we have
%
%e9 #&#
\begin{eqnarray}
\label{eqprincipal} m_{\hf} - m_{f^*}&\le& \alpha v+5
A_{\alpha}(\delta)+2 Q(\mu_0,\delta)
\nonumber\\[-8pt]\\[-8pt]\nonumber
&\le& 6 \biggl(\alpha v +\frac{2\log(\delta^{-1})}{n \alpha} \biggr) +
2 Q(\mu_0,
\delta),
\end{eqnarray}
with probability at least $1-3\delta$.
The condition $ 1-\alpha^2v-4\alpha A_{\alpha}(\delta)-2\alpha
Q(\mu_0,\delta) \ge0$ is satisfied whenever
\[
6 \biggl(\alpha v +\frac{2\log(\delta^{-1})}{n \alpha} \biggr)+2 Q(\mu
_0,\delta) \le
\frac{1}{\alpha}
\]
holds.
%\emi{I changed a little the form of the condition in order to fit the
%bound of Theorem 1 and 2}

%s3.3 #&#
\subsection{Bounding the supremum of the empirical process}
\label{subsecchain}

Theorems \ref{teomain} and \ref{teoalternative} both follow
from (\ref{eqprincipal}) by two different ways of bounding the quantile
$Q(\mu,\delta)$ of $\sup_{f\in\F}\llvert X_f(\mu)-X_{f^*}(\mu
)\rrvert $.
Here, we present these two inequalities. Both of them use
basic results of ``generic chaining''; see Talagrand \cite{talagrand}.
Theorem \ref{teomain} follows from (\ref{eqprincipal}) and the
next inequality.

%pr5 #&#
\begin{prop}
\label{prop-controlrf}
Let $\mu\in\R$ and $\alpha>0$. There exists a universal constant
$L$ such that for any $\delta\in(0,1)$,
\[
Q(\mu,\delta) \le L\log\bigl(2\delta^{-1}\bigr) \biggl(
\frac{\gamma_2(\F,d)}{\sqrt{n}} +\frac{\gamma_1(\F,D)}{n} \biggr).
\]
\end{prop}

The proof is an immediate consequence of Theorem \ref{teochaining}
and (\ref{eq-orlic}) in the
\hyperref[app]{Appendix} and the following lemma.

%le6 #&#
\begin{lemma}
\label{lem-concentration1}
For any $\mu\in\R$, $ \alpha>0 $,
$f,f' \in\F$, and $t>0$,
\[
\mathbb{P} \bigl\{\bigl\llvert X_f(\mu)-X_{f'}(\mu)\bigr
\rrvert>t \bigr\} \le2\exp\biggl(-\frac{nt^2}{2(d(f,f')^2+\sklafrac
{2D(f,f')t}{3})} \biggr),
\]
where the distances $d,D$ are defined at the beginning of Section~\ref
{secresults}.
\end{lemma}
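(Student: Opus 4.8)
The plan is to recognize $X_f(\mu) - X_{f'}(\mu)$ as a centered sum of i.i.d.\ bounded increments and apply Bernstein's inequality. Writing $Y_i = \alpha(f(X_i)-\mu)$ and $Y_i' = \alpha(f'(X_i)-\mu)$, we have
\[
X_f(\mu) - X_{f'}(\mu) = \frac{1}{n\alpha}\sum_{i=1}^n \Bigl( \bigl(\phi(Y_i) - \phi(Y_i')\bigr) - \EXP\bigl[\phi(Y_i) - \phi(Y_i')\bigr]\Bigr)~,
\]
so it is an average of $n$ i.i.d.\ centered random variables $Z_i = \tfrac{1}{\alpha}\bigl(\phi(Y_i)-\phi(Y_i')\bigr) - \tfrac{1}{\alpha}\EXP[\phi(Y_i)-\phi(Y_i')]$. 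The key structural input is that $\phi$ is $1$-Lipschitz: from the definition \eqref{eq:trunc} one checks $\phi'(x) = \frac{1+|x|}{1+|x|+x^2/2}$, which lies in $(0,1]$ for all $x$, hence $|\phi(a)-\phi(b)| \le |a-b|$. Therefore
\[
\Bigl|\tfrac{1}{\alpha}\bigl(\phi(Y_i)-\phi(Y_i')\bigr)\Bigr| \le \tfrac{1}{\alpha}|Y_i - Y_i'| = |f(X_i)-f'(X_i)| \le D(f,f')~,
\]
which gives a bound on the range of each summand, and after centering, $|Z_i| \le 2D(f,f')$ (or one argues more carefully to keep the constant $D(f,f')$; the factor in the final statement has $\tfrac{2}{3}$, consistent with the standard Bernstein form where the a.s.\ bound $b$ appears as $\tfrac{bt}{3}$). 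Similarly, the variance is controlled by
\[
\var{Z_i} \le \EXP\Bigl[\tfrac{1}{\alpha^2}\bigl(\phi(Y_i)-\phi(Y_i')\bigr)^2\Bigr] \le \EXP\bigl[(f(X)-f'(X))^2\bigr] = d(f,f')^2~.
\]

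With these two estimates in hand, Bernstein's inequality for the sum $\sum_{i=1}^n Z_i$ with a.s.\ bound $b = D(f,f')$ on $|Z_i|$ (up to the centering constant, absorbed in the standard way) and variance proxy $\sigma^2 = d(f,f')^2$ yields, for the average $\tfrac{1}{n}\sum Z_i = X_f(\mu)-X_{f'}(\mu)$,
\[
\prob{\bigl|X_f(\mu)-X_{f'}(\mu)\bigr| > t} \le 2\exp\left(-\frac{n t^2}{2\bigl(d(f,f')^2 + \tfrac{2}{3} D(f,f') t\bigr)}\right)~,
\]
which is exactly the claimed bound. I would state the one-sided tail bound, then union over the two signs to get the factor $2$.

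The only genuine point requiring care — and the step I expect to be the mild obstacle — is verifying the $1$-Lipschitz property of $\phi$ cleanly across the kink at $x=0$ (the two branches of \eqref{eq:trunc} must agree in value and derivative at $0$, which they do: both give $\phi(0)=0$ and $\phi'(0)=1$), and then pinning down the exact constant in the a.s.\ bound so that it matches the $\tfrac{2}{3}$ coefficient rather than, say, $\tfrac{4}{3}$. This is handled by applying Bernstein in the form where the relevant a.s.\ quantity is $\esssup|Z_i - \EXP Z_i|$ but bounding moments of $Z_i$ directly via $\EXP|Z_i|^k \le \EXP|\tfrac{1}{\alpha}(\phi(Y_i)-\phi(Y_i'))|^k \le D(f,f')^{k-2} d(f,f')^2$ for $k\ge 2$, which feeds into the classical Bennett/Bernstein moment argument and produces the stated form with coefficient $\tfrac{2}{3}$. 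Everything else is a routine application of a standard concentration inequality.
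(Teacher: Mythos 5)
Your proposal is correct and follows essentially the same route as the paper: write $n(X_f(\mu)-X_{f'}(\mu))$ as a sum of i.i.d.\ centered summands, use the $1$-Lipschitz property of $\phi$ to bound each summand by $2D(f,f')$ and its second moment by $d(f,f')^2$, and apply Bernstein's inequality. Your worry about the coefficient is unnecessary: substituting the bound $b=2D(f,f')$ into the standard Bernstein form $\exp\left(-x^2/\bigl(2(v+bx/3)\bigr)\right)$ already yields the $\tfrac{2D(f,f')t}{3}$ term directly, so the moment-based detour you sketch (whose inequality $\EXP|Z_i|^k\le D^{k-2}d^2$ is in any case not quite right for the \emph{centered} variables when $k>2$) is not needed.
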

%
% \emi{The 2 is no longer missing in previous Lemma}

\begin{pf}
Observe that $n(X_f(\mu)- X_{f'}(\mu))$ is the sum of the independent
zero-mean
random variables
\begin{eqnarray*}
C_i\bigl(f,f'\bigr) & = & \frac{1}{\alpha}\phi
\bigl(\alpha\bigl(f(X_i)-\mu\bigr)\bigr)-\frac{1}{\alpha}\phi\bigl(
\alpha\bigl(f'(X_i)-\mu\bigr)\bigr)
\\
& &{} - \biggl[\frac{1}{\alpha}\mathbb{E} \bigl[\phi\bigl(\alpha\bigl
(f(X)-\mu
\bigr)\bigr) \bigr]-\frac
{1}{\alpha}\mathbb{E} \bigl[\phi\bigl(\alpha
\bigl(f'(X)-\mu\bigr)\bigr) \bigr] \biggr].
\end{eqnarray*}
Note that since the truncation function $\phi$ is 1-Lipschitz, we have
$C_i(f,f')\leq2D (f,f') $. Also,
\[
\sum_{i=1}^{n} \EXP\bigl[C_i
\bigl(f,f'\bigr)^2\bigr] \leq\sum
_{i=1}^{n}\EXP\bigl[ \bigl(\bigl(f(X_i)-
\mu\bigr)-\bigl(f'(X_i)-\mu\bigr) \bigr)^2
\bigr]=n d\bigl(f,f'\bigr)^2.
\]
The lemma follows from Bernstein's inequality [see, e.g., \cite{gabor},
equation~(2.10)].
% \emiQ{Here I replaced the reference Theorem 2.10 by Equation 2.10
%since this is really what we use.}
\end{pf}

Similarly, Theorem \ref{teoalternative} is implied by (\ref
{eqprincipal}) and
the following. Recall the notation of Theorem \ref{teoalternative}.

%th7 #&#
\begin{teo}
Let $\mu\in\R$, $\alpha>0$, and $\delta\in(0,1/3)$.
There exists a universal constant $K$ such that
\[
Q(\mu,\delta) \le K\max\bigl(\Gamma_{\delta},\operatorname{diam}_d(
\F)\bigr)\sqrt{\frac
{\log(\sfrac{8}{\delta})}{n}} .
\]
\end{teo}
%
%\emi{I changed the proof of this theorem entirely}

\begin{pf}
Assume $ \Gamma_{\delta} \ge\operatorname{diam}_d(\F) $. The proof
is based on a standard symmetrization argument.
Let $(X'_1,\dots,X'_n)$ be independent copies of $(X_1,\dots,X_n)$ and
define
\[
Z_i(f)= \frac{1}{n\alpha}\phi\bigl(\alpha\bigl(f(X_i)-
\mu\bigr)\bigr)-\frac
{1}{n\alpha}\phi\bigl(\alpha\bigl(f\bigl(X'_i
\bigr)-\mu\bigr)\bigr).
\]
Introduce also independent Rademacher random variables
$(\ep_1,\dots,\ep_n)$. For any $f\in\F$, denote by $Z(f) = \sum
_{i=1}^{n}\ep_iZ_i(f)$. Then by Hoeffding's inequality, for
all
$f,g \in\F$ and for every $t>0$,
%
%e10 #&#
\begin{equation}
\label{eqhoeffding} \mathbb{P}_{(\ep_1,\dots,\ep_n)} \bigl\{\bigl
\llvert Z(f)-Z(g)\bigr
\rrvert>t \bigr\} \le2\exp\biggl(-\frac
{t^2}{2d_{\mathbb{X},\mathbb{X}'}(f,g)^2} \biggr),
\end{equation}
where
$\PROB_{(\ep_1,\dots,\ep_n)}$ denotes probability with respect to the
Rademacher\break variables only (i.e., conditional on the $X_i$ and $X_i'$)
and
$d_{\mathbb{X},\mathbb{X}'}(f,g) =\break \sqrt{\sum
_{i=1}^n(Z_i(f)-Z_i(g))^2}$ is a random distance.
Using (\ref{eqcharac}) in the \hyperref[app]{Appendix} with distance $ d_{\mathbb
{X},\mathbb{X}'} $ and (\ref{eqhoeffding}),
we get that, for all $\lambda>0$,
%
%e11 #&#
\begin{eqnarray}
\label{eqsubgauss}
&& \mathbb{E}_{(\ep_1,\dots,\ep_n)} \Biggl[\exp\Biggl
(\lambda\sup
_{f\in\F}\Biggl\llvert\sum_{i=1}^n
\ep_i\bigl[Z_i(f)-Z_i\bigl(f^*\bigr)\bigr]
\Biggr\rrvert\Biggr) \Biggr]
\nonumber\\[-8pt]\\[-8pt]\nonumber
&&\qquad \le2\exp\bigl(\lambda^2
L^2 \gamma_2(\F,d_{\mathbb{X},\mathbb
{X}'})^2/4 \bigr),
\end{eqnarray}
where $L$ is a universal constant from Proposition \ref{cororlic}.
%Denote by $\hr'_f(\mu)$ the independent copy of $\hr_f(\mu)$ that
%depends only on the random vector $(X'_1,\dots,X'_n)$.
Observe that since $x \mapsto\phi( x)$ is Lipschitz with constant $1$,
\begin{eqnarray*}
&& d_{\mathbb{X},\mathbb{X}'}(f,g)
\\
&&\qquad = \Biggl(\frac{1}{n^2\alpha^2}\sum_{i=1}^n
\bigl(\phi\bigl(\alpha\bigl(f(X_i)-\mu\bigr) \bigr)-\phi\bigl(
\alpha\bigl(f\bigl(X'_i\bigr)-\mu\bigr) \bigr)
\\
&&\quad\qquad{}  -\phi
\bigl(\alpha\bigl(g(X_i)-\mu\bigr) \bigr)+\phi\bigl(\alpha\bigl(g
\bigl(X'_i\bigr)-\mu\bigr) \bigr) \bigr)^2
\Biggr)^{1/2}
\\
&&\qquad \le\frac{1}{\sqrt{n}} \Biggl(\frac{1}{n}\sum
_{i=1}^n \bigl(f(X_i)-g(X_i)
\bigr)^2 \Biggr)^{1/2}+\frac{1}{\sqrt{n}} \Biggl(
\frac
{1}{n}\sum_{i=1}^n \bigl(f
\bigl(X'_i\bigr)-g\bigl(X'_i
\bigr) \bigr)^2 \Biggr)^{1/2}.
\end{eqnarray*}
This implies
\[
\gamma_2(\F,d_{\mathbb{X},\mathbb{X}'}) \le\frac{1}{\sqrt
{n}} \bigl(
\gamma_2(\F,d_{\mathbb{X}})+\gamma_2(
\F,d_{\mathbb
{X}'}) \bigr).
\]
Combining this with (\ref{eqsubgauss}), we obtain
\begin{eqnarray*}
&& \mathbb{P} \Bigl\{\sup_{f\in\F} \bigl\llvert Z(f)-Z\bigl(f^*
\bigr)\bigr\rrvert\ge t \Bigr\}
\\
&&\qquad\le\mathbb{P} \Bigl\{\sup_{f\in\F} \bigl\llvert Z(f)-Z
\bigl(f^*\bigr)\bigr\rrvert\ge t \bigl\llvert\gamma_2(
\F,d_{\mathbb{X}}) \le\Gamma_{\delta
}\mbox{ and } \gamma_2(
\F,d_{\mathbb{X}'})\le\Gamma_{\delta} \Bigr\}
\\
&&\quad\qquad{}+2\mathbb{P} \bigl\{
\gamma_2(\F,d_{\mathbb{X}}) > \Gamma_{\delta
} \bigr\}
\\
&&\qquad{} \le
\mathbb{E}_{\mathbb{X},\mathbb{X}'} \bigl[\mathbb{E}_{(\ep
_1,\dots,\ep_n)} \bigl[e^{\lambda\sup_{f\in\F}\llvert \sum
_{i=1}^n \ep_i[Z_i(f)-Z_i(f^*)]\rrvert }
\bigr]\bigr\rrvert\gamma_2(\F,d_{\mathbb{X}}) \le
\Gamma_{\delta}\mbox{ and}
\\
&&\quad\qquad{}  \gamma_2(\F,d_{\mathbb{X}'})\le
\Gamma_{\delta} \bigr]e^{-\lambda
t}
\\
&&{}\qquad\quad +\frac{\delta}{4} \qquad\mbox{(by the definition of $\Gamma_\delta$)}
\\
&&\qquad \le 2\exp\biggl(\frac{\lambda^2 L^2}{n} \Gamma_{\delta}^2-
\lambda t \biggr)+\frac{\delta}{4}.
\end{eqnarray*}
%
%Recall that, by definition, $ \Gamma_{\delta} $ is such that $ \prob{
%\gamma_2(T,d_{\mathbb{X}}) > \Gamma_{\delta}} \le\frac{\delta}{8}$.
Optimization in $\lambda$ with $t= 2L\Gamma_{\delta}\sqrt{\log
(8/\delta)/n}$ gives
\[
\mathbb{P} \Bigl\{\sup_{f\in\F} \bigl\llvert Z(f)-Z\bigl(f^*
\bigr)\bigr\rrvert\ge t \Bigr\} \le\frac{\delta}{2}.
\]
A standard symmetrization inequality of tail probabilities of
empirical processes (see, e.g., \cite{Geer00}, Lemma 3.3) guarantees that
\[
\mathbb{P} \Bigl\{\sup_{f\in\F} \bigl\llvert X_f(
\mu)-X_{f^*}(\mu)\bigr\rrvert\ge2t \Bigr\}\le2\mathbb{P} \Bigl\{\sup
_{f\in\F} \bigl\llvert Z(f)-Z\bigl(f^*\bigr)\bigr\rrvert\ge t
\Bigr\}
\]
as long as for any $f \in\F$, $\mathbb{P} \{\llvert X_f(\mu
)-X_{f^*}(\mu)\rrvert \ge t \}
\le\frac{1}{2}$. Recall that $X_f(\mu)-X_{f^*}(\mu)$ is a zero-mean
random variable. Then by Chebyshev's inequality, it suffices to have $
t \ge\sqrt{2}\operatorname{diam}_d(\F)/\sqrt{n}$. Indeed,
\begin{eqnarray*}
&& \frac{\operatorname{Var} (X_f(\mu)-X_{f'}(\mu) )}{t^2}
\\
&&\qquad  \le \frac
{\operatorname{Var} (\sklfrac{1}{\alpha}\phi(\alpha
(f(X)-\mu))-\sklfrac{1}{\alpha}\phi(\alpha(f^*(X)-\mu)) )}{nt^2}
\\
&&\qquad \le \frac{\mathbb{E} [(f(X)-f^*(X))^2 ]}{nt^2}
\\
&&\qquad \le \frac{\operatorname{diam}_d(\F)^2}{nt^2}.
\end{eqnarray*}
Without loss of generality, we can assume $L \ge1$. Since for any
choice of $\delta<\frac{1}{3} $, $ \sqrt{\log(\frac{8}{\delta})}
> \sqrt{2} $ we have $L\Gamma_{\delta}\sqrt{\log(\frac{8}{\delta
})} \ge
\operatorname{diam}_d(\F) \sqrt{2}$. Thus,
\[
\mathbb{P} \biggl\{\sup_{f\in\F} \bigl\llvert X_f(
\mu)-X_{f^*}(\mu)\bigr\rrvert\ge2L\Gamma_{\delta}\sqrt{
\frac{\log(\sfrac
{8}{\delta
})}{n}} \biggr\} \le\delta
\]
as desired. Now, if $\Gamma_{\delta} < \operatorname{diam}_d(\F)$,
$ \mathbb{P} \{\gamma_2(\F,d_{\mathbb{X}}) > \operatorname
{diam}_d(\F) \} \le\frac{\delta}{8} $ and the same argument
holds for
$\operatorname{diam}_d(\F)$ instead of $ \Gamma_{\delta} $. This
completes the proof.
%\emi{I've reworked the proof according to the report 2. It has been
%possible to erase a \sqrt{2} in one inequality}
\end{pf}

%s4 #&#
\section{Applications}
\label{secapps}

In this section, we describe two applications of Theorems \ref
{teomain} and \ref{teoalternative}
to simple statistical learning problems. The first is a regression
estimation problem in which
we distinguish between $L_1$ and $L_2$ risks
and the second is $k$-means clustering.

%s4.1 #&#
\subsection{Empirical risk minimization for regression}

%s4.1.1 #&#
\subsubsection{$L_1$ regression}
\label{secL1regr}
%\emi{The following sentence has been corrected.}
Let $(Z_1,Y_1),\ldots,(Z_n,Y_n)$ be independent identically
distributed random variables
taking values in $\Z\times\R$ where $\Z$ is a bounded subset of
(say) $\R^m$.
Suppose $\G$ is a class of functions $\Z\to\R$
bounded in the $L_{\infty}$ norm, that is,
$\sup_{g\in\G} \sup_{z\in\Z}\llvert g(z)\rrvert <\infty$.
We denote by $\Delta$ the diameter of $\G$ under the distance induced
by this norm.
%\emi{I changed how we present the hypothesis and the definition of
%\Delta}
First, we consider the setup when
the \textit{risk} of each $g\in\G$ is defined by the $L_1$ loss
\[
R(g) = \EXP\bigl\llvert g(Z)-Y\bigr\rrvert,
\]
where the pair $(Z,Y)$ has the same distribution of the $(Z_i,Y_i)$
and is independent of them. Let $g^*= \argmin_{g\in\G} R(g)$
be a minimizer of the risk (which, without loss of generality,
is assumed to exist).
The statistical learning problem we consider here consists of choosing
a function $\wh{g}$ from the class $\G$ that has a risk $R(\wh{g})$
not much larger than $R(g^*)$.

The standard procedure is to pick $\wh{g}$ by minimizing the
empirical risk  $(1/n) \sum_{i=1} \llvert g(Z_i)-Y_i\rrvert $
over $g\in\G$.
However, if the response variable $Y$ is unbounded and may have a heavy tail,
%\emi{minimization is may fail to provide => minimization may fail to
%provide}
ordinary empirical risk minimization may fail to provide a good
predictor of $Y$ as
the empirical risk is an unreliable estimate of the true risk.

Here, we propose choosing $\wh{g}$ by minimizing Catoni's estimate.
To this end, we only need to assume that the second moment of $Y$
is bounded by a known constant. More precisely, assume that
$\EXP Y^2 \le\sigma^2$ for some $\sigma>0$. Then
$\sup_{g\in\G} \operatorname{Var} (\llvert g(Z)-Y\rrvert )\le2\sigma
^2 + 2\sup_{g\in\G}
\sup_{z\in\Z}\llvert g(z)\rrvert ^2
\stackrel{\mathrm{def}}{=}v$ is a known and finite constant.
%\emi{I added the missing constant 2 in the previous definition of v}

Now for all $g\in\G$ and $\mu\in\R$, define
\[
\wh{r}_g(\mu) = \frac{1}{n\alpha} \sum
_{i=1}^n\phi\bigl(\alpha\bigl(\bigl\llvert
g(X_i)-Y_i\bigr\rrvert-\mu\bigr)\bigr),
\]
where $\phi$ is the truncation function
defined in (\ref{eqtrunc}).
Define $\wh{R}(g)$ as the unique value for which
$\wh{r}_g(\wh{R}(g))=0$. The empirical risk minimizer based on Catoni's
risk estimate is then
\[
\wh{g}= \argmin_{g\in\G} \wh{R}(g).
\]
By Theorem \ref{teomain}, the performance of $\wh{g}$ may be
bounded in terms of covering numbers of the class of functions
$\F=\{f(z,y)=\llvert g(z)-y\rrvert : g\in\G\}$ based on the distance
\[
D\bigl(f,f'\bigr)=\sup_{z\in\Z,y\in\R} \bigl\llvert\bigl
\llvert g(z)-y\bigr\rrvert-\bigl\llvert g'(z)-y\bigr\rrvert\bigr
\rrvert\le\sup_{z\in\Z} \bigl\llvert g(z) -g'(z)
\bigr\rrvert.
\]
Thus, the covering numbers of $\F$ under the distance $D$ may
be bounded in terms of the covering numbers of $\G$ under the
$L_{\infty}$ distance. Denoting by $N_{d}(A,\epsilon)$ the
$\epsilon$-covering
number of a set $A$ under the metric $d$,
we obtain the following.

%co8 #&#
\begin{cor}
Consider the setup described above. We assume\break $\int_0^\Delta\log
N_{\infty}(\G,\epsilon)\,d\epsilon< \infty$. Let $n\in\N$, $\delta
\in(0,1/3)$ and $\alpha=\sqrt{2\log(\delta^{-1})/(nv)}$. There
exists an integer $N_0$ and a universal
constant $C$ such that, for all $n \ge N_0$, with probability
at least $1-3\delta$,
\begin{eqnarray*}
&& R(\wh{g}) - R\bigl(g^*\bigr)
\\
&&\qquad \le12\sqrt{\frac{2v\log(\delta^{-1})}{n}}+
C\log\bigl(2
\delta^{-1}\bigr) \biggl(\frac{1}{\sqrt{n}} \int_0^\Delta
\sqrt{\log N_{d}(\G,\epsilon)}\,d\epsilon+O \biggl(\frac{1}{n}
\biggr) \biggr).
\end{eqnarray*}
\end{cor}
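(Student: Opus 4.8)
The plan is to apply Theorem~\ref{thm:main} directly to the class $\F=\{f(z,y)=|g(z)-y|: g\in\G\}$, so the main work is to verify the hypotheses and translate the $\gamma_2(\F,d)$ and $\gamma_1(\F,D)$ terms into entropy integrals for $\G$ under its two natural metrics. First I would check the variance bound: since $f(z,y)^2=|g(z)-y|^2\le 2\sup_{g,z}|g(z)|^2+2y^2$, taking expectations gives $\sup_{f\in\F}\var{f(X)}\le \EXP f(X)^2\le 2\sigma^2+2\sup_{g\in\G}\sup_{z\in\Z}|g(z)|^2=v$, a known finite constant, so the hypothesis of Theorem~\ref{thm:main} holds with this $v$. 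With $\alpha=\sqrt{2\log(\delta^{-1})/(nv)}$ the leading term $6(\alpha v+2\log(\delta^{-1})/(n\alpha))$ becomes $12\sqrt{2v\log(\delta^{-1})/n}$ as already computed in the text after Theorem~\ref{thm:alternative}.

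The second step is the metric comparison. For any $g,g'\in\G$ and any $(z,y)$, the reverse triangle inequality gives $\bigl||g(z)-y|-|g'(z)-y|\bigr|\le|g(z)-g'(z)|$, hence pointwise domination of the corresponding elements of $\F$ by those of $\G$. This yields simultaneously $D(f,f')\le \sup_{z\in\Z}|g(z)-g'(z)|$ (the $L_\infty$ metric on $\G$) and $d(f,f')\le (\EXP|g(Z)-g'(Z)|^2)^{1/2}$, the $L_2(P_Z)$ metric on $\G$; the latter is in turn bounded by the $L_\infty$ metric as well. Since $\gamma_\beta$ is monotone under domination of the metric, $\gamma_2(\F,d)\le\gamma_2(\G,d)$ and $\gamma_1(\F,D)\le\gamma_1(\G,L_\infty)$. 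Now I invoke the standard fact (recalled in the excerpt, e.g.\ \cite{talagrand}) that $\gamma_2(\G,d)$ is at most a universal constant times the Dudley entropy integral $\int_0^{\Delta}\sqrt{\log N_d(\G,\epsilon)}\,d\epsilon$, and similarly $\gamma_1(\G,L_\infty)$ is at most a universal constant times $\int_0^{\Delta}\log N_\infty(\G,\epsilon)\,d\epsilon$, which is finite by hypothesis; both $\gamma$-functionals also exceed $\Delta$, but since $\Z$ is bounded and $\G$ is $L_\infty$-bounded, $\Delta$ is a fixed constant and gets absorbed into the $O(1/n)$ remainder together with the $\gamma_1(\F,D)/n$ contribution.

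The third step is to assemble the bound and check the side condition. Plugging the two entropy estimates into the conclusion of Theorem~\ref{thm:main} gives, with probability at least $1-3\delta$,
\[
R(\wh g)-R(g^*)\le 12\sqrt{\frac{2v\log(\delta^{-1})}{n}}+C\log(2\delta^{-1})\left(\frac{1}{\sqrt n}\int_0^{\Delta}\sqrt{\log N_d(\G,\epsilon)}\,d\epsilon+O\!\left(\frac1n\right)\right),
\]
for a universal constant $C$, where I have identified $m_f$ with $R(g)$, $m_{f^*}$ with $R(g^*)$, and $\wh f$ with $\wh g$ (this identification is exactly the loss-function/risk dictionary set up in the Remark in the Introduction, and $\wh g$ minimizes $\wh R(g)=\wh\mu_f$ over $\F$). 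It remains to note that the side condition of Theorem~\ref{thm:main}, namely $12\sqrt{2v\log(\delta^{-1})/n}+L\log(\delta^{-1})(\gamma_2(\F,d)/\sqrt n+\gamma_1(\F,D)/n)\le \sqrt{nv/(2\log(\delta^{-1}))}$, holds once $n$ is large enough: the left side is $O(n^{-1/2})$ (with constants depending on $\delta$, $v$ and the fixed entropy integrals) while the right side grows like $\sqrt n$, so there is an integer $N_0=N_0(\delta,v,\G)$ beyond which it is satisfied. This defines the $N_0$ in the statement and completes the argument.

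The one genuinely delicate point — really the only obstacle — is bookkeeping the $D$-dependent term. One must make sure that $\gamma_1(\F,D)$ is actually finite, which requires $\G$ to be $L_\infty$-bounded (so that $D$-diameters are finite) \emph{and} that $\int_0^{\Delta}\log N_\infty(\G,\epsilon)\,d\epsilon<\infty$; the corollary's hypothesis supplies exactly this, and since it enters only divided by $n$ it is legitimately swept into $O(1/n)$. Everything else is a routine application of Theorem~\ref{thm:main} together with the classical chaining-to-entropy-integral comparison.
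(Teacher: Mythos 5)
Your proposal is correct and follows essentially the same route as the paper: dominate $D$ and $d$ on $\F$ by the $L_\infty$ and $L_2$ metrics on $\G$ via the reverse triangle inequality, use monotonicity of the $\gamma_\beta$ functionals and the entropy-integral bound (\ref{eq:dudley}) to absorb $\gamma_1(\F,D)/n$ into $O(1/n)$ and to convert $\gamma_2(\F,d)$ into the Dudley integral, and then check that the side condition of Theorem~\ref{thm:main} holds for $n\ge N_0$. Your write-up is in fact more explicit than the paper's (which omits the variance check and the $\gamma_2$ bookkeeping as already established in the surrounding text), but there is no substantive difference in the argument.
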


\begin{pf}
Clearly, if two distances $d_1$ and $d_2$ satisfy $d_1 \le d_2$, then
$ \gamma_1(\F,d_1)\le\gamma_1(\F,d_2) $. Thus, \mbox{$ \gamma_1(\F,D)
\le\gamma_1(\G,\llVert \cdot \rrVert _{\infty}) \le L\int_0^\Delta\log
N_{\infty}(\G,\epsilon)\,d\epsilon< \infty$} [see~(\ref{eqdudley})]
and $\gamma_1(\F,D)/n = O (1/n ) $. The condition
\begin{eqnarray*}
&& 12\sqrt{\frac{2v\log(\delta^{-1})}{n}}+ C\log\bigl(2\delta^{-1}\bigr)
\biggl(
\frac{1}{\sqrt{n}} \int_0^\Delta\sqrt{\log
N_{d}(\G,\epsilon)}\,d\epsilon+O \biggl(\frac{1}{n} \biggr)
\biggr)
\\
&&\qquad  \le\sqrt{\frac{nv}{2\log(\delta^{-1})}}
\end{eqnarray*}
is satisfied for sufficiently large $n$.
Apply Theorem \ref{teomain}.
\end{pf}

Note that the bound essentially has the same form as
(\ref{eqdudley1}) but to apply (\ref{eqdudley1}) it is crucial
that the response variable $Y$ is bounded or at least has
sub-Gaussian tails. We get this under the weak assumption that
$Y$ has a bounded second moment (with a known upper bound).
The price we pay is that covering numbers under the distance
$d_{\mathbb{X}}$
are now replaced by covering numbers under the supremum norm.

\subsubsection{$L_2$ regression}
\label{secl2}

Here, we consider the same setup as in Section~\ref{secL1regr} but
now the
risk is measured by the $L_2$ loss.
The \textit{risk} of each $g\in\G$ is defined by the $L_2$ loss
\[
R(g) = \EXP\bigl(g(Z)-Y\bigr)^2.
\]
Note that Theorem \ref{teomain} is useless here as the difference $\llvert
R(g)-R(g')\rrvert $ is
not bounded by the $L_{\infty}$ distance
of $g$ and $g'$ anymore and the covering numbers of $\F$ under the
metric $D$ are infinite.
However, Theorem \ref{teoalternative} gives meaningful bounds.
Let $g^*= \argmin_{g\in\G} R(g)$ and again we choose $\wh{g}$ by
minimizing Catoni's estimate.

Here, we need to assume that
$\EXP Y^4 \le\sigma^2$ for some $\sigma>0$. Then\break
$\sup_{g\in\G} \operatorname{Var} ((g(Z)-Y)^2 )\le
8\sigma^2 + 8\sup_{g\in\G}
\sup_{z\in\Z}\llvert g(z)\rrvert ^4
\stackrel{\mathrm{def}}{=}v$ is a known and finite constant.
% \emi{I have added the constant 8 missing in the previous definition
%of v}

By Theorem \ref{teoalternative}, the performance of $\wh{g}$ may be
bounded in terms of covering numbers of the class of functions
$\F=\{f(z,y)=(g(z)-y)^2: g\in\G\}$ based on the distance
\[
d_{\mathbb{X}}\bigl(f,f'\bigr)= \Biggl(\frac{1}{n}\sum
_{i=1}^n \bigl(\bigl(g(Z_i)-Y_i
\bigr)^2 -\bigl(g'(Z_i)-Y_i
\bigr)^2 \bigr)^2 \Biggr)^{1/2}.
\]
Note that
\begin{eqnarray*}
\bigl\llvert\bigl(g(Z_i)-Y_i\bigr)^2-
\bigl(g'(Z_i)-Y_i\bigr)^2 \bigr
\rrvert&=&\bigl\llvert g(Z_i)-g'(Z_i)\bigr
\rrvert\bigl\llvert2Y_i-g(Z_i)-g'(Z_i)
\bigr\rrvert
\\
&\le& 2\bigl\llvert g(Z_i)-g'(Z_i)\bigr
\rrvert\bigl(\llvert Y_i\rrvert+\Delta\bigr)
\\
&\le& 2d_{\infty}\bigl(g,g'\bigr) \bigl(\llvert
Y_i\rrvert+\Delta\bigr),
\end{eqnarray*}
and, therefore,
\begin{eqnarray*}
d_{\mathbb{X}}\bigl(f,f'\bigr) &\le& 2 d_{\infty}
\bigl(g,g'\bigr)\sqrt{\frac{1}{n}\sum
_{i=1}^n\bigl(\llvert Y_i\rrvert+
\Delta\bigr)^2}
\\
& \le&2 \sqrt{2} d_{\infty}\bigl(g,g'\bigr)\sqrt{
\Delta^2+\frac{1}{n}\sum_{i=1}^nY_i^2}.
\end{eqnarray*}
By Chebyshev's inequality,
\[
\mathbb{P} \Biggl\{\frac{1}{n}\sum_{i=1}^nY_i^2
- \mathbb{E} \bigl[Y^2 \bigr]> t \Biggr\} \le\frac{\operatorname{Var}
(Y^2 )}{nt^2} \le
\frac
{\sigma^2}{nt^2}
\]
thus $ \frac{1}{n}\sum_{i=1}^nY_i^2 > \mathbb{E} [Y^2 ] +
\sqrt{8\sigma
^2/(n\delta)} $ with probability at most $\delta/8$ and
\[
d_{\mathbb{X}}\bigl(f,f'\bigr) > 2 \sqrt{2} d_{\infty}
\bigl(g,g'\bigr)\sqrt{\Delta^2+\mathbb{E}
\bigl[Y^2 \bigr] + \sqrt{\frac{8\sigma^2}{n\delta}}}
\]
occurs with a probability bounded by $\frac{\delta}{8}$. Recall again
that for two distances $d_1$ and $d_2$ such that $d_1 \le cd_2$ one has
$ \gamma_2(\G,d_1)\le c\gamma_2(\G,d_2) $. Then Theorem \ref
{teoalternative} applies with
\[
\Gamma_{\delta}=2\sqrt{2}\sqrt{\Delta^2+\mathbb{E}
\bigl[Y^2 \bigr] + \sqrt{\frac
{8\sigma^2}{n\delta}}}\gamma_2(
\G,d_{\infty})
\]
and $\Gamma_{\delta} \ge\Delta\ge\operatorname{diam}_d(\F)$.

%co9 #&#
\begin{cor}
Consider the setup described above. Let $n\in\N$, $\delta\in
(0,1/3)$ and $\alpha=\sqrt{2\log(\delta^{-1})/(nv)}$. There exists
an integer $N_0$ and a universal
constant $C$ such that, for all $n \ge N_0$, with probability
at least $1-3\delta$,
\begin{eqnarray*}
&& R(\wh{g}) - R\bigl(g^*\bigr)
\\
&&\qquad  \le 12\sqrt{\frac{2v\log(\delta^{-1})}{n}}
\\
&&\quad\qquad{}+C\sqrt{\log\biggl(\frac
{8}{\delta}
\biggr)}\sqrt{\frac{\Delta^2+\mathbb{E}
[Y^2 ]+8\sigma
^2/(n \delta)}{n}}\int_0^\Delta
\sqrt{\log N_{\infty}(\G,\epsilon)}\,d\epsilon.
\end{eqnarray*}
\end{cor}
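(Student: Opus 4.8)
The plan is to apply Theorem~\ref{thm:alternative} to the class $\F=\{(z,y)\mapsto(g(z)-y)^2:g\in\G\}$ with the stated choice $\alpha=\sqrt{2\log(\delta^{-1})/(nv)}$. For this $\alpha$ the term $6\left(\alpha v+2\log(\delta^{-1})/(n\alpha)\right)$ equals $12\sqrt{2v\log(\delta^{-1})/n}$, which is precisely the first term of the claimed bound, so the real work is in estimating the quantity $\max(\Gamma_\delta,\text{diam}_d(\F))$ appearing in the second term $K\max(\Gamma_\delta,\text{diam}_d(\F))\sqrt{\log(8/\delta)/n}$ of Theorem~\ref{thm:alternative}.

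First I would pin down the $\Gamma_\delta$ announced in the discussion preceding the corollary. From $(g(z)-y)^2-(g'(z)-y)^2=(g(z)-g'(z))(2y-g(z)-g'(z))$ and the $L_\infty$ bound on $\G$ one obtains the sample-wise domination $d_{\mathbb{X}}(f,f')\le 2\sqrt2\,d_\infty(g,g')\sqrt{\Delta^2+\frac1n\sum_{i=1}^n Y_i^2}$ and, with the population average in place of the empirical one, $d(f,f')\le 2\sqrt2\,d_\infty(g,g')\sqrt{\Delta^2+\EXP Y^2}$. By monotonicity of $\gamma_2$ under domination of metrics this gives $\gamma_2(\F,d_{\mathbb{X}})\le 2\sqrt2\sqrt{\Delta^2+\frac1n\sum_i Y_i^2}\,\gamma_2(\G,d_\infty)$, and, since $\gamma_2$ dominates the diameter, $\text{diam}_d(\F)\le 2\sqrt2\,\Delta\sqrt{\Delta^2+\EXP Y^2}\le 2\sqrt2\sqrt{\Delta^2+\EXP Y^2}\,\gamma_2(\G,d_\infty)$. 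A Chebyshev bound for $\frac1n\sum_i Y_i^2$, based on $\var{Y^2}\le\EXP Y^4\le\sigma^2$, yields $\frac1n\sum_i Y_i^2\le\EXP Y^2+\sqrt{8\sigma^2/(n\delta)}$ with probability at least $1-\delta/8$. Hence $\Gamma_\delta:=2\sqrt2\sqrt{\Delta^2+\EXP Y^2+\sqrt{8\sigma^2/(n\delta)}}\,\gamma_2(\G,d_\infty)$ satisfies $\PROB\{\gamma_2(\F,d_{\mathbb{X}})>\Gamma_\delta\}\le\delta/8$, as Theorem~\ref{thm:alternative} requires, and $\max(\Gamma_\delta,\text{diam}_d(\F))=\Gamma_\delta$.

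Next I would bound $\gamma_2(\G,d_\infty)$ by Dudley's entropy integral, $\gamma_2(\G,d_\infty)\le L\int_0^\Delta\sqrt{\log N_\infty(\G,\epsilon)}\,d\epsilon$ (by the Dudley-type bound; cf.\ \cite{talagrand} and~(\ref{eq:dudley})), and substitute the resulting $\Gamma_\delta$ into Theorem~\ref{thm:alternative}. Its second term then becomes a universal constant times $\sqrt{\log(8/\delta)}\,\sqrt{(\Delta^2+\EXP Y^2+\sqrt{8\sigma^2/(n\delta)})/n}\int_0^\Delta\sqrt{\log N_\infty(\G,\epsilon)}\,d\epsilon$. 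It remains to pick $N_0$ large enough that, for all $n\ge N_0$, two things hold: the side condition of Theorem~\ref{thm:alternative} is met — which is automatic, since its left-hand side is $O(n^{-1/2})$ while $1/\alpha=\sqrt{nv/(2\log(\delta^{-1}))}=\Theta(n^{1/2})$ — and $\sqrt{8\sigma^2/(n\delta)}\le\Delta^2+\EXP Y^2$, so that $\Delta^2+\EXP Y^2+\sqrt{8\sigma^2/(n\delta)}\le 2\bigl(\Delta^2+\EXP Y^2+8\sigma^2/(n\delta)\bigr)$ and the radicand can be written exactly as in the statement at the cost of doubling the constant. Collecting the two terms and recalling that Theorem~\ref{thm:alternative} holds with probability at least $1-3\delta$ finishes the proof.

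There is no genuinely delicate step: the substantive estimate is the distance comparison carried out immediately before the corollary, and what remains is bookkeeping. The points to watch are that the $\delta/8$ Chebyshev slack for $\frac1n\sum_i Y_i^2$ is exactly the slack the definition of $\Gamma_\delta$ absorbs in Theorem~\ref{thm:alternative} (so the final confidence stays $1-3\delta$ and is not further degraded), that the side condition of Theorem~\ref{thm:alternative} is honoured for $n\ge N_0$, and that the cosmetic replacement of $\sqrt{8\sigma^2/(n\delta)}$ by $8\sigma^2/(n\delta)$ inside the square root is legitimate once $n\ge N_0$.
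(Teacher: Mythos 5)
Your proposal follows essentially the same route as the paper: the distance comparison $d_{\mathbb{X}}(f,f')\le 2\sqrt{2}\,d_\infty(g,g')\sqrt{\Delta^2+\frac{1}{n}\sum_i Y_i^2}$, the Chebyshev bound giving the same $\Gamma_\delta$, the Dudley entropy-integral bound on $\gamma_2(\G,d_\infty)$, and an application of Theorem~\ref{thm:alternative} with the side condition absorbed into the choice of $N_0$ — all of which the paper carries out in the discussion immediately preceding the corollary. If anything you are more careful than the paper on two small points (justifying $\text{diam}_d(\F)\le\Gamma_\delta$ via $\Delta\le\gamma_2(\G,d_\infty)$, and the replacement of $\sqrt{8\sigma^2/(n\delta)}$ by $8\sigma^2/(n\delta)$ inside the radical for $n\ge N_0$), so the argument is correct.
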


\begin{pf}
Apply Theorem \ref{teoalternative} and note that the condition holds
for sufficiently large $n$.
\end{pf}

The bound of the corollary essentially matches the best rates of
convergence one can get even in the case of bounded regression
under such general conditions. For special cases, such as linear
regression, better bounds may be proven for other methods; see
Audibert and Catoni \cite{AuCa11},
Hsu and Sabato
\cite{HsuSa13} and
Minsker \cite{Min13}.

%s4.2 #&#
\subsection{k-means clustering under heavy-tailed distribution}
\label{seckmeans}

In \textit{$k$-means\break clustering}---or \textit{vector quantization}---one
wishes to represent a distribution by a finite number of points.
Formally, let $X$ be a random vector taking values in $\R^m$
and let $P$ denote the distribution of $X$.
Let $k\ge2$ be a positive integer that we fix for the rest of
the section. A clustering scheme is given by
a set of $k$ cluster centers $C=\{y_1,\ldots,y_k\} \subset\R^m$
and a \textit{quantizer} $q:\R^m \to C$. Given a
\textit{distortion measure} $\ell: \R^m\times\R^m \to[0,\infty)$,
one wishes to find $C$ and $q$ such that the expected
distortion
\[
D_k(P,q) = \EXP\ell\bigl(X,q(X)\bigr)
\]
is as small as possible. The minimization problem is
meaningful whenever $\EXP\ell(X,0) <\infty$ which we assume
throughout. Typical distortion measures are of the form
$\ell(x,y)=\llVert x-y\rrVert ^{\alpha}$ where $\llVert
\cdot\rrVert $ is a norm on $\R^m$
and $\alpha>0$ (typically $\alpha$ equals $1$ or $2$).
Here, for concreteness and simplicity, we assume
that $\ell$ is the Euclidean distance $\ell(x,y)=\llVert
x-y\rrVert $
though the results may be generalized in a straightforward manner
to other norms.
In a way equivalent to the arguments of Section~\ref{secl2},
the results may be
generalized to the
case of the quadratic distortion $\ell(x,y)=\llVert x-y\rrVert ^2$.
In order to avoid
repetition of arguments, the details are omitted.

It is not difficult to see that if $\EXP\llVert X\rrVert <
\infty$, then
there exists a (not necessarily unique) quantizer
$q^*$ that is optimal, that is, $q^*$ is such that for all clustering
schemes $q$,
\[
D_k(P,q) \ge D_k\bigl(P,q^*\bigr) \stackrel{\mathrm{def}}{=}D_k^*(P).
\]
It is also clear that $q^*$ is a \textit{nearest neighbor quantizer},
that is,
\[
\bigl\llVert{x-q^*(x)}\bigr\rrVert=\min_{y_i \in C}\llVert
{x-y_i}\rrVert.
\]
Thus, nearest neighbor quantizers are determined by their cluster
centers $C=\{y_1,\ldots,y_k\}$. In fact, for all quantizers
with a particular set $C$ of cluster centers, the corresponding
nearest neighbor quantizer has minimal distortion and, therefore,
it suffices to restrict our attention to nearest neighbor quantizers.

In the problem of empirical quantizer design, one is given
an i.i.d. sample $X_1,\ldots,X_n$ drawn from the distribution $P$
and one's aim is to find a quantizer $q_n$ whose distortion
\[
D_k(P,q_n) = \EXP\bigl[ \bigl\llVert
X-q_n(X)\bigr\rrVert\mid X_1,\ldots,X_n
\bigr]
\]
is as close to $D_k^*(P)$ as possible. A natural strategy is to
choose a quantizer---or equivalently, a set $C$ of cluster centers---by
minimizing the \textit{empirical distortion}
\[
D_k(P_n,q) = \frac{1}{n} \sum
_{i=1}^n \bigl\llVert X_i-q(X_i)
\bigr\rrVert= \frac{1}{n} \sum_{i=1}^n
\min_{j=1,\ldots,k}\llVert X_i-y_j\rrVert,
\]
where $P_n$ denotes the standard empirical distribution
based on $X_1,\ldots,X_n$. If $\EXP\llVert X\rrVert <\infty
$, then
the empirically optimal quantizer
asymptotically minimizes the distortion. More precisely,
if $q_n$ denotes the empirically optimal
quantizer [i.e., $q_n=\argmin_q D_k(P_n,q)$], then
\[
\lim_{n\to\infty} D_k(P,q_n) =
D_k^*(P) \qquad\mbox{with probability $1$};
\]
see Pollard \cite{Pol81,Pol82a}
and Abaya and Wise \cite{AbWi84} (see also
Linder \cite{Lin02}). The rate of convergence of
$D_k(P,q_n)$ to $D_k^*(P)$ has\vspace*{1pt} drawn considerable attention;
see, for example, Pollard \cite{Pol82b},
Bartlett, Linder and Lugosi \cite{BaLiLu98},
Antos \cite{Ant05},
Antos, Gy\"orfi and Gy\"orgy \cite{AnGyGy05},
Biau, Devroye and Lugosi \cite{BiDeLu08},
Maurer and Pontil \cite{MaPo10} and
Levrard \cite{Lev13}.
Such rates are typically studied under the assumption that
$X$ is almost surely bounded. Under such assumptions, one can show
that
\[
\EXP D_k(P,q_n) - D_k^*(P) \le C(P,k,m)
n^{-1/2},
\]
where the constant $C(P,k,m)$ depends on $\esssup\llVert X\rrVert $,
$k$, and the dimension
$m$. The value of the constant has mostly been investigated in the
case of quadratic loss $\ell(x,y)=\llVert x-y\rrVert ^2$
but most proofs may be modified for the case studied here.
For the quadratic loss, one may take $C(P,k,m)$ as a constant multiple
of $B^2\min(\sqrt{k^{1-2/m}m},k)$ where $B=\esssup\llVert
X\rrVert $.

However, little is known about the finite-sample performance of empirically
designed quantizers under possibly heavy-tailed distributions.
In fact, there is no hope to extend the results cited above for
distributions with finite second moment simply because empirical
averages are poor estimators of means under such general conditions.

In the recent paper of Telgarsky and Dasgupta \cite{TeDa13}, bounds on
the excess risk under conditions on higher moments have been developed.
They prove a bound of $ \mathcal{O}(n^{-1/2+2/p}) $ for the excess
distortion where $p$ is the number of moments of $\llVert X\rrVert $
that are
assumed to be finite.
Here, we show that there exists an empirical quantizer $\widehat{q}_n$
whose excess distortion
$D_k(P,\wh{q}_n) - D_k^*(P)$ is of the order of $n^{-1/2}$ (with high
probability)
under the only assumption that $\mathbb{E} [\llVert {X}\rrVert ^2 ]$
is finite.
This may be achieved by choosing a quantizer that minimizes
Catoni's estimate of the distortion.

The proposed empirical quantizer uses two parameters that depend on
the (unknown) distribution of $X$. For simplicity, we assume that
upper bounds for these two parameters are available. (Otherwise either
one may try to estimate them or, as the sample size grows, use
increasing values for these parameters. The details go beyond
the scope of this paper.)

One of these parameters is the second moment $\operatorname{Var}
(X )=\mathbb{E} [\llVert {X-\mathbb{E} [X
]}\rrVert ^2 ]$ and let $V$ be
an upper bound. The other parameter $\rho>0$ is an upper bound
for the norm of the possible cluster centers.
The next lemma offers an estimate.

%le10 #&#
\begin{lem}[(Linder \cite{Lin02})]\label{lem-local}
Let $2\le j\le k$ be the unique integer
such that $D^*_k=\cdots=D^*_j<D^*_{j-1}$
and define $\varepsilon= (D^*_{j-1}-D^*_j)/2$.
Let $(y_1,\dots,y_j)$ be a set of cluster centers such that
the distortion of the corresponding quantizer is less than
$D^*_j+\varepsilon$.
Let $B_r =\{x:\llVert {x}\rrVert \le r\}$ denote the closed
ball of radius $r>0$
centered at the origin.
If $\rho>0$, is such that:
\begin{itemize}
\item$\frac{\rho}{10} P(B_{\sfrac{\rho}{10}}) > 2 \EXP\llVert
{X}\rrVert $,
\item$P(B_{2\rho/5})> 1- \frac{\varepsilon^2}{4\mathbb{E}
[\llVert {X}\rrVert ^2 ]}$,
\end{itemize}
then for all $1\le i \le k$, $\llVert {y_i}\rrVert \le\rho$.
\end{lem}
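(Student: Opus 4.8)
The plan is to argue by contradiction. Suppose some cluster center, say $y_i$, satisfies $\norm{y_i}>\rho$; I will then build a quantizer with at most $j-1$ distinct centers whose distortion is strictly below $D^*_{j-1}$, which is impossible. Write $m_1=\EXP\norm{X}$ and $m_2=\EXP\norm{X}^2$. Note that $\varepsilon=(D^*_{j-1}-D^*_j)/2>0$ by the definition of $j$, and that $D^*_j+\varepsilon\le D^*_{j-1}\le D^*_1\le m_1$: the first inequality because $D^*_j\le D^*_{j-1}$, the second because adding centers cannot increase the optimal distortion, and the last because the single center at the origin already achieves distortion $m_1$. If $m_1=0$ then $X=0$ almost surely and there is nothing to prove, so assume $m_1>0$; in particular the distortion of the given quantizer $q$ is at most $D^*_j+\varepsilon\le m_1<2m_1$.

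First I would use the first displayed condition to show that at least one center, say $y_1$, lies near the origin, namely $\norm{y_1}\le\rho/5$. Indeed, if every center satisfied $\norm{y_\ell}>\rho/5$, then for every $x\in B_{\rho/10}$ and every $\ell$ the triangle inequality gives $\norm{x-y_\ell}\ge\norm{y_\ell}-\norm{x}>\rho/5-\rho/10=\rho/10$, so that the distortion of $q$ would be at least $\frac{\rho}{10}P(B_{\rho/10})>2m_1$, contradicting the bound of the previous paragraph.

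Next I would localize the region $A=\{x:q(x)=y_i\}$ mapped to the far center. For $x\in A$, on one hand $\norm{x-y_i}\le\norm{x-y_1}\le\norm{x}+\rho/5$ because $y_i$ is a nearest center, and on the other hand $\norm{x-y_i}\ge\norm{y_i}-\norm{x}>\rho-\norm{x}$; combining these forces $\norm{x}>2\rho/5$, i.e.\ $A\subseteq B_{2\rho/5}^{c}$, so the second condition yields $P(A)\le P(B_{2\rho/5}^{c})<\varepsilon^2/(4m_2)$. Now define $q'$ by keeping $q$ outside $A$ and sending every $x\in A$ to $y_1$; then $q'$ takes at most $j-1$ distinct values, hence $D_k(P,q')\ge D^*_{j-1}$. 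Outside $A$ nothing changes, while on $A$ we have $\rho/5<\norm{x}/2$, so $\norm{x-y_1}\le\norm{x}+\rho/5<\tfrac32\norm{x}$, and therefore by Cauchy--Schwarz
\[
D_k(P,q')-D_k(P,q)\le \tfrac32\,\esp{\norm{X}\mathbbm{1}_A}\le \tfrac32\sqrt{m_2}\sqrt{P(A)}<\tfrac32\sqrt{m_2}\cdot\frac{\varepsilon}{2\sqrt{m_2}}=\tfrac34\varepsilon .
\]
Thus $D_k(P,q')<D^*_j+\varepsilon+\tfrac34\varepsilon=D^*_j+\tfrac74\varepsilon<D^*_j+2\varepsilon=D^*_{j-1}$, contradicting $D_k(P,q')\ge D^*_{j-1}$. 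Hence no center has norm exceeding $\rho$.

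There is no genuine analytic difficulty here: the whole proof is a sequence of triangle-inequality and Cauchy--Schwarz estimates. The only point that needs care is the bookkeeping of the radii $\rho/10$, $\rho/5$, $2\rho/5$ and of the threshold $\varepsilon^2/(4m_2)$, which must be arranged exactly so that the reassignment cost in the last step stays below $2\varepsilon$, together with the use of the definition of $j$ (the first index at which the optimal distortion strictly increases) to convert the $(j-1)$-center competitor $q'$ into a contradiction.
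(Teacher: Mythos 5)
Your proof is correct. The paper itself gives no proof of this lemma---it is imported verbatim from Linder \cite{Lin02}---and your argument (locate one center within $\rho/5$ of the origin via the first condition, show the cell of any center of norm $>\rho$ lies outside $B_{2\rho/5}$, and reassign that cell to the near center to produce a $(j-1)$-point quantizer beating $D^*_{j-1}$) is exactly the standard one from that source; all the radius and Cauchy--Schwarz bookkeeping checks out, with the reassignment cost coming to $\tfrac34\varepsilon<\varepsilon$ as required (your closing remark that it must stay ``below $2\varepsilon$'' should read ``below $\varepsilon$,'' but the computation you actually perform is the right one).
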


Now we are prepared to describe the proposed empirical quantizer.
Let $\C_{\rho}$ be the set of all collections
$C=\{y_1,\ldots,y_k\} \in(\R^m)^k$ of cluster centers with
$\llVert y_j\rrVert \le\rho$ for all $j=1,\ldots,k$.
For each $C\in\C_{\rho}$, denote by $q_C$ the corresponding
quantizer. Now for all $C\in\C_{\rho}$, we may calculate
Catoni's mean estimator of the distortion
$D(P,q_C)= \EXP\llVert X-q_C(X)\rrVert = \EXP\min_{j=1,\ldots,k}\llVert X_i-y_j\rrVert $
defined as
the unique value $\mu\in\R$ for which
\[
\frac{1}{n\alpha} \sum_{i=1}^n \phi
\Bigl(\alpha\Bigl(\min_{j=1,\ldots,k} \llVert X_i-y_j
\rrVert-\mu\Bigr) \Bigr) =0,
\]
where we use the parameter value $\alpha=\sqrt{2/(nkV)}$.
Denote this estimator by $\wh{D}(P_n,q_C)$ and let $\wh{q}_n$
be any quantizer minimizing the estimated distortion.
An easy compactness argument shows that such a minimizer exists.

The main result of this section is the following bound
for the distortion of the chosen quantizer.

%th11 #&#
\begin{theorem}
Assume that $\operatorname{Var} (X ) \le V <\infty$ and
$n\ge m$. Then, with
probability at least $1-\delta$,
\[
D_k(P,\wh{q}_n)-D_k\bigl(P,q^*\bigr) \le C
\biggl(\log\frac{1}{\delta} \biggr) \biggl(\sqrt{\frac{Vk}{n}}+\sqrt{
\frac{mk}{n}} \biggr)+O \biggl(\frac{1}{n} \biggr),
\]
where the constant $C$ only depends on $\rho$.
\end{theorem}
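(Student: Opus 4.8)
The plan is to apply Theorem~\ref{thm:main} to the class $\F=\{f_C:C\in\C_\rho\}$ of functions on $\X=\R^m$, where $f_C(x)=\min_{j=1,\dots,k}\norm{x-y_j}$ for $C=\{y_1,\dots,y_k\}$, so that $m_{f_C}=D_k(P,q_C)$, Catoni's estimate of $m_{f_C}$ is exactly $\wh D(P_n,q_C)$, and the minimizer $\wh f$ of Catoni's estimate over $\F$ is $f_{\wh q_n}$. The first step is to check that restricting the cluster centers to $B_\rho$ costs nothing, i.e.\ $\inf_{C\in\C_\rho}D_k(P,q_C)=D_k^*(P)$. This is precisely Lemma~\ref{lem-local}: since $\rho$ is assumed to be a valid upper bound, its two conditions hold, so every configuration with distortion close to $D_k^*$---in particular $q^*$ itself---has all its centers in $B_\rho$. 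Hence $m_{f^*}=\inf_{f\in\F}m_f=D_k(P,q^*)$ (the minimum being attained by compactness of $\C_\rho$) and $m_{\wh f}-m_{f^*}=D_k(P,\wh q_n)-D_k(P,q^*)$, which is the quantity to bound.

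Next I would compute the three ingredients of Theorem~\ref{thm:main}: a uniform variance bound, $\gamma_2(\F,d)$, and $\gamma_1(\F,D)$. Each $f_C$ is $1$-Lipschitz on $\R^m$ (a minimum of the $1$-Lipschitz maps $x\mapsto\norm{x-y_j}$), so
\[
\var{f_C(X)}\le\esp{\big(f_C(X)-f_C(\esp X)\big)^2}\le\esp{\norm{X-\esp X}^2}=\var X\le V,
\]
and we may take $v=V$. For the metric terms the key observation is that, although each $f_C$ is an \emph{unbounded} function on $\R^m$, the pairwise differences are bounded: fixing an ordering of the centers of $C$ and $C'$ and applying the triangle inequality to the center of $C'$ nearest to $x$ gives $D(f_C,f_{C'})=\sup_x|f_C(x)-f_{C'}(x)|\le\max_j\norm{y_j-y_j'}$. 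Thus $C\mapsto f_C$ is a $1$-Lipschitz surjection from $\C_\rho$, seen as a subset of $\R^{mk}$ of diameter $2\rho$ under the metric $\max_j\norm{y_j-y_j'}$, onto $(\F,D)$; a volumetric estimate gives $N_D(\F,\ep)\le (c\rho/\ep)^{mk}$ for $\ep\le 2\rho$, so that, by the entropy-integral bounds for the $\gamma$-functionals (see~(\ref{eq:dudley})) and $d\le D$,
\[
\gamma_2(\F,d)\le\gamma_2(\F,D)\le L\sqrt{mk}\int_0^{2\rho}\sqrt{\log(c\rho/\ep)}\,d\ep=C_1(\rho)\sqrt{mk},\qquad \gamma_1(\F,D)\le C_2(\rho)\,mk.
\]

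Now I would substitute these into Theorem~\ref{thm:main} with the confidence-independent choice $\alpha=\sqrt{2/(nkV)}$. The first bracket becomes
\[
6\Big(\alpha V+\tfrac{2\log(\delta^{-1})}{n\alpha}\Big)=6\sqrt{\tfrac{2V}{nk}}+6\log(\delta^{-1})\sqrt{\tfrac{2kV}{n}}\le C_3\log(\delta^{-1})\sqrt{\tfrac{Vk}{n}},
\]
while the chaining part is $L\log(2\delta^{-1})\big(\gamma_2(\F,d)/\sqrt n+\gamma_1(\F,D)/n\big)\le C_4(\rho)\log(2\delta^{-1})\sqrt{mk/n}+O(1/n)$. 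Applying Theorem~\ref{thm:main} with $\delta/3$ in place of $\delta$ (to turn $1-3\delta$ into $1-\delta$, absorbing the resulting $\log 3$ into the constant) yields the claimed bound with a constant depending only on $\rho$. Finally one checks that the side condition of Theorem~\ref{thm:main}, and Catoni's own requirement $\alpha^2V+2\log(\delta^{-1})/n\le1$, are met for $n$ large enough: the right-hand side $1/\alpha=\sqrt{nkV/2}$ grows like $\sqrt n$ while the left-hand side tends to $0$, so the hypotheses hold once $n$ is sufficiently large, and the $O(1/n)$ term carries the remaining lower-order contributions.

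I expect the main obstacle to be precisely the two reductions that make Theorem~\ref{thm:main} applicable: first, using Lemma~\ref{lem-local} (and the availability of a valid $\rho$) to replace unconstrained quantizer design by minimization over the compact set $\C_\rho$ without moving the optimum; and second, showing that, despite the $f_C$ being unbounded on $\R^m$, the class $\F$ has finite $D$-diameter and only polynomially growing covering numbers under the supremum distance---this is exactly what keeps the otherwise problematic term $\gamma_1(\F,D)/n$ of order $1/n$. The remaining work---the variance bound, the volumetric covering-number estimate, and the bookkeeping of constants---is routine.
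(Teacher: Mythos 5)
Your proposal follows essentially the same route as the paper's proof: apply Theorem~\ref{thm:main} to $\F_{\rho}$ with $\alpha=\sqrt{2/(nkV)}$, establish a uniform variance bound, and control $\gamma_2(\F_{\rho},d)$ and $\gamma_1(\F_{\rho},D)$ by covering $(B_{\rho})^k$ and invoking the entropy-integral bound (\ref{eq:dudley}). The only (harmless) differences are that your Lipschitz argument yields the sharper variance bound $V$ where the paper settles for $kV$, and that you make explicit the reduction to $\C_{\rho}$ via Lemma~\ref{lem-local}, which the paper leaves implicit.
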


\begin{pf}
The result follows from Theorem \ref{teomain}. All we need to check
is that $\operatorname{Var} (\min_{j=1,\ldots,k} \llVert
X-y_j\rrVert )$ is bounded by $kV$
and estimate the covering numbers of the class of functions
\[
\F_{\rho} = \Bigl\{ f_C(x)=\min_{y \in C}
\llVert x-y\rrVert: C\in\C_{\rho} \Bigr\}.
\]
The variance bound follows simply by the fact that for all $C\in\C$,
\begin{eqnarray*}
\operatorname{Var} \Bigl(\min_{j=1,\ldots,k} \llVert X-y_j
\rrVert\Bigr) &\le& \sum_{i=1}^k
\operatorname{Var} \bigl(\llVert{X-y_i}\rrVert\bigr)
\\
&\le& \sum_{i=1}^k \mathbb{E} \bigl[
\llVert{X-\EXP X}\rrVert^2 \bigr]+ \llVert{\EXP X-y_i}
\rrVert^2-\mathbb{E} \bigl[\llVert{X-y_i}\rrVert
\bigr]^2
\\
&\le& kV.
\end{eqnarray*}
In order to use the bound of Theorem \ref{teomain}, we need to
bound the covering numbers of the class $\F_{\rho}$ under both
metrics $d$ and $D$. We begin with the metric
\[
D(f_C,f_{C'})=\sup_{x \in\R^m}\bigl\llvert
f_C(x)-f_{C'}(x) \bigr\rrvert.
\]
The notation $B_z(\epsilon,d)$ refers to the ball under the metric $d$
of radius $\epsilon$ centered at~$z$.
Let $Z$ be a subset of $B_{\rho}$ such that
\[
\mathcal{B}_{B_{\rho}}:=\bigl\{B_{z}\bigl(\epsilon,\llVert\cdot
\rrVert\bigr): z \in Z\bigr\}
\]
is a covering of the set $B_{\rho}$ by balls of radius $\epsilon$
under the Euclidean norm. Let $C \in\mathcal{C}_{\rho}$ and
associate to any $y_i \in C$ one of the centers in $Z$ such that $
\llVert {y_i-z_i}\rrVert \le\epsilon$. If there is more
than one possible
choice for $z_i $, we pick one of them arbitrarily. We denote by $
q_{C'} $ the nearest neighbor quantizer with codebook $C'=(z_i)_i $.
Finally,\vspace*{2pt} let $S_i = q_{C'}^{-1}(z_i)$.
Now clearly, $\forall i, \forall x \in S_i $
\begin{eqnarray*}
f_C(x)-f_{C'}(x) &=& \min_{1\le j \le k} \llVert
x-y_j\rrVert-\min_{1\le j \le
k} \llVert
x-z_j\rrVert
\\
& = & \min_{1\le j \le k} \llVert x-y_j\rrVert- \llVert
x-z_i\rrVert
\\
&\le& \llVert x-y_i\rrVert- \llVert x-z_i\rrVert\le
\epsilon
\end{eqnarray*}
and similarly, $f_{C'}(x)-f_{C}(x)\le\epsilon$. Then $f_C \in
B_{f_{C'}}(\epsilon,D)$ and
\[
\mathcal{B}_{\F_{\rho}}:= \bigl\{B_{f_{C}}(\epsilon,D): C \in
Z^{k} \bigr\}
\]
is a covering of $ \F_{\rho} $. Since $Z$ can be taken such that $
\llvert Z\rrvert = N_{\llVert \cdot\rrVert
}(B_{\rho},\epsilon)$ we obtain
\[
N_{d}(\F_{\rho},\epsilon) \le N_{D}(
\F_{\rho},\epsilon) \le N_{\llVert
\cdot\rrVert }(B_{\rho},
\epsilon)^k.
\]
By standard estimates on the covering numbers of the ball $B_{\rho}$
by balls of size $\epsilon$ under the Euclidean metric,
\[
N_{\llVert \cdot\rrVert }(B_\rho,\epsilon) \le\biggl(\frac
{4\rho}{\epsilon
}
\biggr)^m
\]
(see, e.g., Matousek \cite{mat02}).
In other words, there exists a universal constant $L$ and constants
$C_\rho$ and $C'_\rho$ that depends only on $\rho$ such that
\begin{eqnarray*}
\gamma_2(\F_{\rho},d) &\le& L\int_{0}^{2\rho}
\sqrt{\log N_{d}(\F_\rho,\epsilon)} \,d\epsilon\le
C_\rho\sqrt{km},
\end{eqnarray*}
and
\begin{eqnarray*}
\gamma_1(\F_{\rho},D) &\le&  L\int
_{0}^{2\rho} \log N_{D}(
\F_\rho,\epsilon) \,d\epsilon\le C'_\rho km.
\end{eqnarray*}

Theorem \ref{teomain} may now be applied to the class $\F_{\rho}$.
\end{pf}

%s5 #&#
\section{Simulation study}
\label{secsim}

In this closing section, we present the results of two simulation
exercises that assess the performance of the
estimators developed in this work.

%s5.1 #&#
\subsection{$L_2$ regression}

The first application is an $L_2$ regression exercise.
Data are simulated from a linear model with heavy-tailed errors and
the $L_2$ regression procedure based on Catoni's risk minimizer
introduced in Section~\ref{secl2}
is used for estimation. The procedure is benchmarked against regular
(``vanilla'') $L_2$ regression based on the minimization of the
empirical $L_2$ loss.

The simulation exercise is designed as follows.
We simulate $(Z_1,Y_1)$,\break $(Z_2,Y_2), \ldots, (Z_n,Y_n)$ i.i.d. pairs of
random variables in $\mathbb R^5 \times\mathbb R$.
The vector $Z_i$ of explanatory variables is drawn from a multivariate
normal distribution with zero mean, unit variance
and correlation matrix equal to an equi-correlation matrix with
correlation $\rho=0.9$.
% and kept fixed across simulation.
The response variable $Y_i$ is generated as
\[
Y_i = Z_i^T\theta^* +
\epsilon_i ,
\]
where the parameter vector $\theta^*$ is set to $(0.25, -0.25, 0.50,
0.70, -0.75)$
and $\epsilon_i$ is a zero mean error term.
The error term $\epsilon_i$ is drawn from a Pareto distribution with
tail parameter $\beta$ and is appropriately recentered in order to
have zero mean.
As it is well known, the tail parameter $\beta$ determines which
moments of the Pareto random variable are finite.
More specifically, the moment of order $k$ exists only if $k < \beta$.
The focus is on finding the value of $\theta$ which minimizes the
$L_2$ risk
\[
\mathbb E \bigl\llvert Y - Z_i^T\theta\bigr\rrvert
^2 .
\]
The parameter $\theta$ is estimated using the Catoni and the vanilla
$L_2$ regressions.
Let $\widehat{R}_C(\theta)$ denote the solution of the equation
\[
\widehat r_\theta(\mu)= {1 \over n\alpha} \sum
_{i=1}^n \phi\bigl( \alpha\bigl( \bigl\llvert
Y_i - Z_i^T \theta\bigr\rrvert
^2 - \mu\bigr) \bigr) = 0 .
\]
Then the Catoni $L_2$ regression estimator is defined as
\[
\wh{\theta}_{n C} = \arg\min_{\theta}
\widehat{R}_C(\theta) .
\]
The vanilla $L_2$ regression estimator is defined as the minimizer of
the empirical $L_2$ loss,
\[
\wh{\theta}_{n V} = \arg\min_{\theta}
\widehat{R}_V(\theta) = \arg\min_{\theta}
{1 \over n}\sum_{i=1}^n \bigl
\llvert Y_i - Z_i^T\theta\bigr\rrvert
^2 ,
\]
which is the classical least squares estimator.
%The estimated risk of the Catoni and vanilla estimators are denoted as
%$\widehat R_C(\widehat\theta_C)$ and $\widehat R_V(\widehat\theta_V)$
%respectively.
The precision of each estimator is measured by their excess risk
\begin{eqnarray*}
R(\wh{\theta}_{n C}) - R\bigl(\theta^*\bigr) & = & \mathbb E \bigl
\llvert Y - Z^{T}\wh{\theta}_{n C}\bigr\rrvert
^2 - \mathbb E \bigl\llvert Y - Z^{T}\theta^*\bigr\rrvert
^2,
\\
R(\wh{\theta}_{n V}) - R\bigl(\theta^*\bigr) & = & \mathbb E \bigl
\llvert Y - Z^{T}\wh{\theta}_{n V}\bigr\rrvert
^2 - \mathbb E \bigl\llvert Y - Z^{T}\theta^*\bigr\rrvert
^2 .
\end{eqnarray*}
We estimate excess risk by simulation. For each replication of the
simulation exercise, we estimate the risk of the estimators and the
optimal risk
using sample averages based on an i.i.d. sample $(Z_1',Y_1'),\ldots,
(Z_m',Y_m')$ that is independent of the one used for estimation, that is,
%
%e12 #&#
\begin{eqnarray}
\label{eqnladcen} \widetilde{R}(\wh{\theta}_{n C}) & = &
{1 \over m} \sum_{i=1}^m \bigl
\llvert Y_i' - {Z_i'}^{T}
\wh{\theta}_{n C}\bigr\rrvert^2,\nonumber
\\
\widetilde{R}(\wh{\theta}_{n V}) & = & {1 \over m} \sum
_{i=1}^m \bigl\llvert Y_i'
- {Z_i'}^{T}\wh{\theta}_{n V}\bigr
\rrvert^2,
\\
\widetilde{R}\bigl(\theta^*\bigr) & = & {1 \over m} \sum
_{i=1}^m \bigl\llvert Y_i'
- {Z_i'}^{T} \theta^*\bigr\rrvert
^2 .\nonumber
\end{eqnarray}
The simulation experiment is replicated for different values of the
Pareto tail parameter $\beta$ ranging from $2.01$ to $6.01$ and
different values of the sample size $n$, ranging from $50$ to 1000.
For each combination of the tail parameter $\beta$ and sample size
$n$, the experiment is replicated 1000 times.
%and report the average across all replications of the (simulated)
%average risk of Equations (\ref{eqnladcen}).

%GL the green curve is invisible in the printout!!
%
%f2 #&#
\begin{figure}[t]

\includegraphics{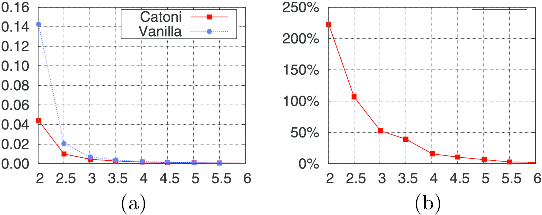}

\caption{$L_2$ regression parameter estimation.
The figure plots the excess risk of the Catoni and vanilla
$L_2$ regression parameter estimators \textup{(a)} and the percentage
improvement of the Catoni procedure relative to the vanilla \textup{(b)} as a
function of the tail parameter $\beta$ for a sample size $n$ equal to~500.}\label{figreg}
\end{figure}

%%GL replace d by \beta in the table!!
%%
%t1 #&#
\begin{table}[b]
\tabcolsep=0pt
\caption{Relative performance of the Catoni $L_2$ parameter estimator}\label{tblreg}
\begin{tabular*}{\tablewidth}{@{\extracolsep{\fill}}@{}ld{4.2}d{3.2}d{3.2}d{3.2}d{3.2}d{3.2}@{}}
\hline
$\bolds{\beta}$ & \multicolumn{1}{c}{$\bolds{n=50}$} & \multicolumn{1}{c}{$\bolds{n=100}$} &
\multicolumn{1}{c}{$\bolds{n=250}$} & \multicolumn{1}{c}{$\bolds{n=500}$} &
\multicolumn{1}{c}{$\bolds{n=750}$} & \multicolumn{1}{c}{$\bolds{n=1000}$}
\\
\hline
2.01&3872.10&440.50&171.30&222.70&218.20&142.80\\
2.50&169.20&158.70&151.50&106.70&91.70&57.40\\
3.01&137.60&178.00&89.00&52.50&62.70&63.50\\
3.50&54.40&20.90&41.30&39.20&38.10&33.50\\
4.01&30.20&44.40&25.50&15.70&16.30&15.90\\
4.50&16.50&12.10&11.30&10.60&6.90&13.70\\
5.01&10.20&7.80&10.20&6.40&5.70&3.10\\
5.50&6.00&14.80&3.90&2.90&2.10&2.20\\
6.01&3.90&1.90&2.70&2.10&1.90&1.40\\
\hline
\end{tabular*}
\tabnotetext[]{}{The table reports the percentage improvement of the excess
risk of the Catoni $L_2$ regression estimator relative to the
vanilla procedure for different values of the tail parameter $\beta$
and sample size~$n$.}
\end{table}

Figure~\ref{figreg} displays the Monte Carlo estimate of the excess
risk of the Catoni and benchmark regression estimators as functions of
the tail parameter $\beta$ when the sample size $n$ is equal to $500$.
The left panel shows the level of the excess risks
$R(\wh{\theta}_{n C}) - R(\theta^*)$ and $R(\wh{\theta}_{n V}) -
R(\theta^*)$
as a function of $\beta$
and the right one shows the percentage improvement of the excess risk
of the Catoni procedure over the benchmark calculated as
$ (R(\wh{\theta}_{n V}) - R(\wh{\theta}_{n C}) )/
(R(\wh{\theta}_{n C}) - R(\theta^*) )$.
When the tails are not excessively heavy (high values of $\beta$) the
difference between the procedures is small.
As the tails become heavier (small values of $\beta$), the risks of
both procedures increase.
Importantly, the Catoni estimator becomes progressively more efficient
as the tails become heavier and becomes significantly more efficient
when the tail parameter is close to $2$.
%As expected, the improvement is greater for smaller sample sizes.
%Interestingly, the improvement of the procedure based on Catoni's
%estimate is minor in terms of the excess risk.
%Our theoretical findings guarantee a controled excess risk for risk
%minimization based on Catoni's estimate while no such guarantees are
%available for vanilla empirical risk minimization.
%However, at least in the experiments reported here, plain empirical
%risk minimization performs quite well.
%
Detailed results for different values of $n$ are reported in Table~\ref
{tblreg}.
%The pattern documented in the pictures holds for different values of
%$n$ but the advantages of the Catoni approach are stronger when the
%sample size $n$ is smaller.
Overall, the Catoni $L_2$ regression estimator never performs worse
than the benchmark, and it is substantially better
when the tails of the data are heavy.

%s5.2 #&#
\subsection{$k$-means}

In the second experiment, we carry out a $k$-means clustering exercise.
Data are simulated from a heavy-tailed mixture distribution and then
cluster centers are chosen
by minimizing Catoni's estimate of the $L_2$ distortion.
%The $k$--means algorithm based on Catoni is the standard $k$--means
%algorithm where the cluster center distortion is estimated using the
%Catoni estimator.
The performance of the algorithm is benchmarked against the
(``vanilla'') $k$-means algorithm procedure
where the distortion is estimated by the standard empirical average.

The simulation exercise is designed as follows.
An i.i.d. sample of random vectors $X_1, \ldots, X_n$ in $\mathbb
R^2$ is
drawn from a four-component mixture distribution with equal weights.
The means of the mixture components are $(5,5), (-5,5), (-5,-5)$ and $(5,-5)$.
Each component of the mixture is made up of two appropriately centered
independent draws from a Pareto distribution with tail parameter $\beta$.
%The random vector $X_i$ belonging to component $j$ is simualted as
%\[
% X_{i}
% =
% \mu_j
% +
% \left[\begin{array}{c}
% \epsilon_{i1} \\
% \epsilon_{i2}
% \end{array}\right]
%\]
%where $\epsilon_{i1}$ and $\epsilon_{i1}$ are independently drawn from
%a appropriately centered Pareto distribution with tail parameter $
%\beta$.
The cluster centers obtained by the $k$-means algorithm based on
Catoni and the vanilla $k$-means algorithm are denoted, respectively, by
$\wh{q}_{n C}$ and $\wh{q}_{n V}$.
%The estimated distortion of the estimated quantizers estimated by the
%two algorithms
%are denoted respectively as $D^C_k(P_n,\widehat q^C_n)$ and $D_k(P_n,
%\widehat
%q^V_n)$.
%Note that $D^C_k(P_n,\widehat q^C_n)$ is estimated using the Catoni
%approach while $D_k(P_n,\widehat q^C_n)$ is the simple empirical
%average
%$L_2$ loss.
%
%GL added this remark about k-means
(Since finding the empirically optimal cluster centers is
computationally prohibitive, we use the well-known iterative
optimization procedure ``$k$-means'' for the vanilla version and a
similar variant for the Catoni scheme.)
Analogously to the previous exercise, we summarize the performance of
the clustering procedures using
the excess risk of the algorithms, that is,
\begin{eqnarray*}
&&D_k( P , \wh{q}_{n C} ) - D_k\bigl( P , q^*
\bigr),\qquad D_k( P , \wh{q}_{n V} ) - D_k\bigl(
P , q^* \bigr),
\end{eqnarray*}
where $q^*$ denotes the means of the mixture components.
We estimate excess risk by simulation.
We compute the distortion of the quantizers using an
i.i.d. sample $X_1',\ldots, X_m'$ of vectors that is independent of
the ones used for estimation, that is,
%
%e13 #&#
\begin{eqnarray}
\label{eqnkmeanscen} D_k(P_m,\wh{q}_{n C}) & = &
{1 \over m} \sum_{i=1}^m \min
_{j=1,\ldots,k} \bigl\llVert X_i' -
\wh{q}_{n C}\bigl(X_i'\bigr) \bigr\rrVert
^2,\nonumber
\\
D_k(P_m,\wh{q}_{n V}) & = &
{1 \over m} \sum_{i=1}^m \min
_{j=1,\ldots,k} \bigl\llVert X_i' -
\wh{q}_{n V}\bigl(X_i'\bigr) \bigr\rrVert
^2,
\\
D_k\bigl(P_m,q^*\bigr) & = & {1 \over m}
\sum_{i=1}^m \min_{j=1,\ldots,k}
\bigl\llVert X_i' - q^*\bigl(X_i'
\bigr) \bigr\rrVert^2.\nonumber
\end{eqnarray}
The experiment is replicated for different values of the tail parameter
$\beta$ ranging from $2.01$ to $6.01$ and different values of the
sample size $n$ ranging from $50$ to 1000.
For each combination of tail parameter $\beta$ and sample size $n$ the
experiment is replicated 1000 times.

%f3 #&#
\begin{figure}[t]

\includegraphics{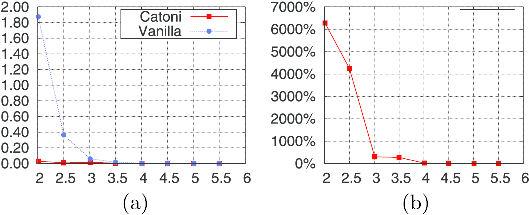}

\caption{$k$-means quantizer estimation.
The figure plots the excess risk of the Catoni and vanilla
$k$-means quantizer estimator \textup{(a)} and the percentage improvement of
the Catoni procedure relative to the vanilla \textup{(b)} as a function of the
tail parameter $\beta$ for a sample size $n$ equal to 500.}\label{figkmeans}
\end{figure}

%t2 #&#
\begin{table}[b]
\tabcolsep=0pt
\caption{Relative performance of the Catoni $k$-means quantizer estimator}\label{tblkmeans}
\begin{tabular*}{\tablewidth}{@{\extracolsep{\fill}}@{}ld{3.2}d{4.2}d{4.2}d{4.2}d{4.2}d{5.2}@{}}
\hline
$\bolds{\beta}$ & \multicolumn{1}{c}{$\bolds{n=50}$} & \multicolumn{1}{c}{$\bolds{n=100}$} &
\multicolumn{1}{c}{$\bolds{n=250}$} & \multicolumn{1}{c}{$\bolds{n=500}$} &
\multicolumn{1}{c}{$\bolds{n=750}$} & \multicolumn{1}{c}{$\bolds{n=1000}$}
\\
\hline
2.01&823.80&2180.40&3511.60&6278.90&7858.70&10{,}684.60\\
2.50&404.50&1007.40&2959.80&4255.40&6828.60&9093.60\\
3.01&301.10&312.20&286.80&298.60&813.60&1560.20\\
3.50&129.60&188.60&213.30&271.40&448.60&410.00\\
4.01&73.80&30.90&26.80&20.30&18.20&13.10\\
4.50&27.60&22.90&16.50&11.70&9.50&10.10\\
5.01&16.40&10.80&11.60&8.70&6.00&7.20\\
5.50&9.00&6.80&9.20&5.00&4.10&4.00\\
6.01&3.50&4.70&5.00&2.70&3.20&3.10\\
\hline
\end{tabular*}
\tabnotetext[]{}{The table reports the improvement of the Catoni $k$-means
quantizer estimator relative to the
vanilla procedure for different values of the tail parameter $\beta$
and sample size $n$.}
\end{table}

%For each set of simulations we compute the Monte Carlo average the
%performance indices in Equations (\ref{eqnq}) and (\ref{eqnd}).
Figure~\ref{figkmeans} displays the Monte Carlo estimate of excess
risk of the Catoni and benchmark estimators as a function of tail
parameter $\beta$ for $n=500$.
The left panel shows the estimated excess risk while the right panel
shows the percentage improvement of the excess risk of the Catoni
procedure, calculated as
$ (D_k( P , \wh{q}_{n V} )-D_k( P , \wh{q}_{n C} ) )/
(D_k( P , \wh{q}_{n C} ) - D_k( P , q^* ) )$.

The overall results are analogous to the ones of the $L_2$ regression
application.
When the tails of the mixture are not excessively heavy (high values of
$\beta$) the difference in the procedures is small.
As the tails become heavier (small values of $\beta$), the risk of both
procedures increases,
but the Catoni algorithm becomes progressively more efficient.
The percentage gains for the Catoni procedure are substantial when the
tail parameter is smaller than $4$.
Table~\ref{tblkmeans} reports detailed results for different values
of $n$.
%The pattern documented in the pictures holds for different values of
%$n$ but the advantages of the Catoni are more marked when the sample
%size $n$ is smaller.
As in the $L_2$ regression simulation study, the Catoni $k$-means
algorithm never performs worse than the benchmark and it is
substantially better when the tails of the mixture are heavy.

%sA #&#
\begin{appendix}\label{app}
\section*{Appendix}
%sA.1 #&#
\subsection{A chaining theorem}
The following result is a version of standard bounds based on
``generic chaining''; see Talagrand \cite{talagrand}. We include
the proof for completeness.

Recall that if $\psi$ is a nonnegative increasing convex function
defined on $\R_+$ with $\psi(0)=0$, then
the Orlicz norm of a random variable $X$ is defined by
\[
\llVert X\rrVert_{\psi} = \inf\biggl\{c>0 : \mathbb{E} \biggl[\psi
\biggl(\frac{\llvert X\rrvert }{c} \biggr) \biggr]\le1 \biggr\}.
\]
We consider Orlicz norms defined by
\[
\psi_1(x) =\exp(x)-1 \quad\mbox{and} \quad\psi_2(x)=
\exp\bigl(x^2\bigr)-1.
\]
For further information on Orlicz norms, see \cite{vaWe96},
Chapter~2.2. First, $\llVert X\rrVert _{\psi
_{1}}\le\llVert X\rrVert _{\psi_{2}}\sqrt{\log(2)}$ holds.
Also note that, by Markov's inequality, $\llVert X\rrVert
_{\psi_{1}}\le c$ implies
that $\PROB\{\llvert X\rrvert >t\}\le2e^{-t/c}$ and
similarly, if
$\llVert X\rrVert _{\psi_{2}}\le c$, then $\PROB\{\llvert X\rrvert
>t\}\le2e^{-t^2/c^2}$.
Then
\begin{eqnarray}
\label{eq-orlic} X &\le&\llVert X\rrVert_{\psi_{1}}\log\bigl(2
\delta^{-1}\bigr)\qquad\mbox{with probability at least } 1-\delta,
\nonumber\\[-8pt]\\[-8pt]\nonumber
X &\le&\llVert X\rrVert_{\psi_{2}}\sqrt{\log\bigl(2\delta^{-1}
\bigr)}\qquad\mbox{with probability at least } 1-\delta.
\nonumber
\end{eqnarray}
Recall the following definition (see, e.g., \cite{talagrand}, Definition~1.2.3).
Let $T$ be a (pseudo) metric space. An increasing sequence
$(\mathcal{A}_n)$ of partitions of $T$ is called \textit{admissible} if
for all $n=0,1,2,\ldots, \#\mathcal{A}_n \le2^{2^n}$.
For any $t\in T$, denote by $A_n(t)$ the unique element
of $\mathcal{A}_n$ that contains $t$.
Let $\Delta(A)$ denote the diameter of the set $A\subset T$.
Define, for $\beta=1,2$,
\[
\gamma_{\beta}(T,d) =\inf_{\mathcal{A}_n}\sup
_{t \in T} \sum_{n \ge0} 2^{n/\beta
}
\Delta\bigl(A_n(t)\bigr),
\]
where the infimum is taken over all admissible sequences. First of all,
we know from \cite{talagrand}, equation (1.18), that there exists a
universal constant $L$ such that
%
%eA.1 #&#
\begin{equation}
\label{eqdudley} \gamma_{\beta}(T,d) \le L \int_0^{\mathrm{diam}_d(T)}
\bigl(\log N_d(T,\varepsilon) \bigr)^{\sfrac{1}{\beta}}\,d\varepsilon.
\end{equation}
%
%\emi{Added here the Dudley bound}

%th12 #&#
\begin{theorem}
\label{teochaining}
Let $(X_t)_{t\in T}$ be a stochastic process indexed by a set $T$ on which
two (pseudo) metrics, $d_1$ and $d_2$, are defined such that $T$ is bounded
with respect to both metrics.
Assume that for any $s,t \in T$ and for all $x>0$,
\[
\PROB\bigl\{\llvert X_s-X_t\rrvert>x \bigr\} \le2\exp
\biggl(-\frac{1}{2}\frac{x^2}{d_2(s,t)^2+d_1(s,t)x} \biggr).
\]

Then for all $t\in T$,
\[
\Bigl\llVert\sup_{s \in T} \llvert X_s-X_t
\rrvert\Bigr\rrVert_{\psi_{1}} \le L \bigl( \gamma_1(T,d_1)
+ \gamma_2(T,d_2) \bigr)
\]
with $L\le384 \log(2)$.
\end{theorem}

The proof of Theorem \ref{teochaining} uses the following lemma.

%le13 #&#
\begin{lem}[(\cite{vaWe96}, Lemma 2.2.10)]\label{lemvvw}
Let $a,b>0$ and assume that the random variables $X_1,\dots,X_m$ satisfy,
for all $x>0$,
\[
\PROB\bigl\{\llvert X_i\rrvert>x\bigr\} \le2\exp\biggl(-
\frac
{1}{2}\frac{x^2}{b+ax} \biggr).
\]
Then
\[
\Bigl\llVert\max_{1\le i \le m} X_i\Bigr\rrVert
_{\psi_{1}} \le48 \bigl(a\log(1+m) +\sqrt{b} \sqrt{\log(1+m)} \bigr).
\]
\end{lem}

\begin{pf*}{Proof of Theorem \ref{teochaining}}
Consider an admissible sequence $(\mathcal{B}_n)_{n \ge0}$ such that
for all $t \in T$,
\[
\sum_{n \ge0} 2^n \Delta_1
\bigl(B_n(t)\bigr) \le2 \gamma_1(T,d_1)
\]
and an admissible sequence $(\mathcal{C}_n)_{n \ge0}$ such that
for all $t \in T$,
\[
\sum_{n \ge0} 2^{n/2} \Delta_2
\bigl(C_n(t)\bigr) \le2 \gamma_2(T,d_2).
\]
Now we may define an admissible sequence by intersection of the
elements of $(\mathcal{B}_{n-1})_{n \ge1}$ and $(\mathcal
{C}_{n-1})_{n \ge1}$:
set $\mathcal{A}_0=\{T\}$ and let
\[
\mathcal{A}_n =\{ B\cap C : B \in\mathcal{B}_{n-1} \mbox{ and } C\in
\mathcal{C}_{n-1}\}.
\]
$(\mathcal{A}_n)_{n \ge0}$ is an admissible sequence because
each $\mathcal{A}_n$ is increasing and contains at most
$(2^{2^{n-1}})^2=2^{2^n}$ sets.
Define a sequence of finite sets $T_0=\{t\}\subset T_1\subset\cdots
\subset T$
such that $T_n$ contains a single point in each set of $\mathcal{A}_n$.
For any $s \in T$, denote by $\pi_n(s)$ the unique element of
$T_n$ in $A_n(s)$.
Now for any $s \in T_{k+1}$, we write
\[
X_s-X_t= \sum_{k=0}^\infty
(X_{\pi_{k+1}(s)}-X_{\pi_{k}(s)} ).
\]
Then, using the fact that $\llVert \cdot\rrVert _{\psi_{1}}$
is a norm and
Lemma \ref{lemvvw},
\begin{eqnarray*}
&& \Bigl\llVert\sup_{s \in T} \llvert
X_s-X_t\rrvert\Bigr\rrVert_{\psi_{1}}
\\
&&\qquad  \le \sum_{k=0}^\infty\Bigl\llVert\max
_{s \in T_{k+1}} \llvert X_{\pi_{k+1}(s)}-X_{\pi_{k}(s)}\rrvert\Bigr
\rrVert_{\psi_{1}}
\\
&&\qquad \le 48\sum_{k=0}^\infty
\bigl(d_1\bigl(\pi_{k+1}(s),\pi_{k}(s)\bigr)\log
\bigl(1+2^{2^{k+1}}\bigr)
\\
&&\quad\qquad{} +d_2\bigl(\pi_{k+1}(s),
\pi_{k}(s)\bigr)\sqrt{\log\bigl(1+2^{2^{k+1}}\bigr)} \bigr).
\end{eqnarray*}
Since $(\mathcal{A}_n)_{n \ge0}$ is an increasing sequence,
$\pi_{k+1}(s)$ and $\pi_{k}(s)$ are both in $A_k(s)$. By construction,
$A_k(s) \subset B_k(s)$ and, therefore, $d_1(\pi_{k+1}(s),\pi_{k}(s))
\le\Delta_1(B_k(s))$. Similarly, we have $d_2(\pi_{k+1}(s),\pi
_{k}(s)) \le\Delta_2(C_k(s))$.
Using $\log(1+2^{2^{k+1}}) \le4\log(2) 2^k$, we get
\begin{eqnarray*}
\Bigl\llVert\max_{s \in T} \llvert X_s-X_t
\rrvert\Bigr\rrVert_{\psi_{1}} &\le& 192 \log(2) \Biggl[ \sum
_{k=0}^\infty2^k \Delta_1
\bigl(B_k(s)\bigr)+\sum_{k=0}^\infty2^{k/2}
\Delta_2\bigl(C_k(s)\bigr) \Biggr]
\\
&\le& 384 \log(2) \bigl[ \gamma_{1}(T,d_1)+
\gamma_{2}(T,d_2) \bigr].
\end{eqnarray*}\upqed
\end{pf*}

%pr14 #&#
\begin{proposition}
\label{cororlic}
Assume that for any $s,t \in T$ and for all $x>0$,
\[
\PROB\bigl\{\llvert X_s-X_t\rrvert>x \bigr\} \le2\exp
\biggl(-\frac{x^2}{2d_2(s,t)^2} \biggr).
\]
Then for all $t\in T$,
\[
\Bigl\llVert\sup_{s \in T} \llvert X_s-X_t
\rrvert\Bigr\rrVert_{\psi_{2}} \le L \gamma_2(T,d_2),
\]
where $L$ is a universal constant.
\end{proposition}

%G I wrote this instead of the proof:
The proof of Proposition \ref{cororlic} is
similar to the proof of Theorem \ref{teochaining}.
One merely needs to replace Lemma \ref{lemvvw} by Lemma 2.2.2 in
\cite{vaWe96}
and proceed identically. The details are omitted.

We may use Proposition \ref{cororlic} to bound the moment generating
function of $\sup_{s \in T} \llvert X_s-X_t\rrvert $ as follows.
Set $S = \sup_{s \in T} \llvert X_s-X_t\rrvert $. Then using
$ab \le(a^2+b^2)/2$,
we have, for every $\lambda>0$,
\[
\exp(\lambda S) \le\exp\bigl(S^2/\llVert{S}\rrVert
^2_{\psi_2}+\lambda^2\llVert{S}\rrVert
^2_{\psi_2}/4 \bigr),
\]
and, therefore,
%
%eA.2 #&#
\begin{equation}
\label{eqcharac} \mathbb{E} \Bigl[\exp\Bigl(\lambda\sup_{s \in T}
\llvert X_s-X_t\rrvert\Bigr) \Bigr] \le2\exp\bigl(
\lambda^2 L^2 \gamma_2(T,d_2)^2/4
\bigr).
\end{equation}
\end{appendix}

%% \begin{proof}[Proof of Proposition \ref{cororlic}]
%% The proof is similar to the proof of Theorem \ref{teochaining}.
%Consider an admissible sequence $(\mathcal{A}_n)_{n \ge0}$ such that
%% for all $t \in T$,
%% \[
%% \sum_{n \ge0} 2^{n/2} \Delta_2(A_n(t)) \le2 \gamma_2(T,d_2)
%% \]
%% and a sequence of finite sets $T_0=\{t\}\subset T_1\subset\dots
%\subset T$
%% such that $T_n$ contains a single point in each set of $
%\mathcal{A}_n$. We start from a similar result as Lemma \ref{lemvvw}
%that is Lemma 2.2.2 in \cite{vaWe96}. Under the hypothesis of Lemma
%\ref{lemvvw}, there exists a universal constant $K$ such that one has
%% \[
%% \orlic{\max_{1\le i \le m} X_i}{2} \le K\sqrt{b}\sqrt{\log(1+m)}.
%% \]
%% Following the arguments of Theorem \ref{teochaining} we get
%% \begin{eqnarray*}
%% \orlic{\sup_{s \in T} \left\vert X_s-X_t\right\vert}{2} & \le&
%\sum_{k=0}^\infty\orlic{
%\max_{s \in T_{k+1}} \left\vert X_{\pi_{k+1}(s)}-X_{\pi_{k}(s)}\right
%|}{2}\\
%% & \le&
%% K\sum_{k=0}^\infty d_2(\pi_{k+1}(s),\pi_{k}(s))\sqrt{
%\log(1+2^{2^{k+1}})}\\
%% & \le& 8K\log(2) \gamma_2(T,d_2) .
%% \end{eqnarray*}
%% \end{proof}
%% %\emi{I have added the proof of the second report on the calculation
%of\esp{\exp(\lambda\sup_{s \in T} \left\vert X_s-X_t\left\vert)}}

% zodis "Acknowledgments" paliekamas pagal autoriu
\section*{Acknowledgments}
We thank the referees for their thorough reading and insightful comments that helped greatly improve
the manuscript.

%\begin{supplement}[id=suppA]
%\sname{Supplement A}
%\stitle{}
%\slink[doi]{10.1214/00-AOSXXXXSUPP} %[doi,text={...}] - jei reikia
%suskaldyti doi
%\sdatatype{.pdf}
%\sfilename{aosXXXX\_supp.pdf}
%\sdescription{}
%\end{supplement}

% imsref loaded by linak, 2015-06-30 09:23:00
%
% imsref loaded by linak, 2015-09-08 13:53:55

\printaddresses

\begin{thebibliography}{35}

%b1 ###
\bibitem{AbWi84}
%
\begin{barticle}[mr]
\bauthor{\bsnm{Abaya},~\bfnm{Efren~F.}\binits{E.~F.}} \AND
\bauthor{\bsnm{Wise},~\bfnm{Gary~L.}\binits{G.~L.}}
(\byear{1984}).
\btitle{Convergence of vector quantizers with applications to optimal
quantization}.
\bjournal{SIAM J. Appl. Math.}
\bvolume{44}
\bpages{183--189}.
\bid{doi={10.1137/0144013}, issn={0036-1399}, mr={0730008}}
\end{barticle}
%
\iffalse\OrigBibText
E.~A. Abaya and G.~L. Wise.
Convergence of vector quantizers with applications to optimal
quantization.
{\em SIAM Journal on Applied Mathematics}, 44:183--189, 1984.
\endOrigBibText\fi
\bptok{imsref}%
% NOT OUTPUTTED:
% number = 1
% doi = http://dx.doi.org/10.1137/0144013
% coden = SMJMAP
% fjournal = SIAM Journal on Applied Mathematics
\endbibitem

%b2 ###
\bibitem{AMS02}
%
\begin{barticle}[mr]
\bauthor{\bsnm{Alon},~\bfnm{Noga}\binits{N.}},
\bauthor{\bsnm{Matias},~\bfnm{Yossi}\binits{Y.}} \AND
\bauthor{\bsnm{Szegedy},~\bfnm{Mario}\binits{M.}}
(\byear{1999}).
\btitle{The space complexity of approximating the frequency moments}.
\bjournal{J. Comput. System Sci.}
\bvolume{58}
\bpages{137--147}.
%\bnote{Twenty-eighth Annual ACM Symposium on the Theory of Computing
%(Philadelphia, PA, 1996)}.
\bid{doi={10.1006/jcss.1997.1545}, issn={0022-0000}, mr={1688610}}
\bptnote{check year}%
\end{barticle}
%
\iffalse\OrigBibText
N.~Alon, Y.~Matias, and M.~Szegedy.
The space complexity of approximating the frequency moments.
{\em Journal of Computer and System Sciences}, 58:137--147, 2002.
\endOrigBibText\fi
\bptok{imsref}%
% NOT OUTPUTTED:
% number = 1, part 2
% doi = http://dx.doi.org/10.1006/jcss.1997.1545
% coden = JCSSBM
% fjournal = Journal of Computer and System Sciences
\endbibitem

%b3 ###
\bibitem{Ant05}
%
\begin{barticle}[mr]
\bauthor{\bsnm{Antos},~\bfnm{Andr{\'a}s}\binits{A.}}
(\byear{2005}).
\btitle{Improved minimax bounds on the test and training distortion of
empirically designed vector quantizers}.
\bjournal{IEEE Trans. Inform. Theory}
\bvolume{51}
\bpages{4022--4032}.
\bid{doi={10.1109/TIT.2005.856980}, issn={0018-9448}, mr={2239018}}
\end{barticle}
%
\iffalse\OrigBibText
A.~Antos.
Improved minimax bounds on the test and training distortion of
empirically designed vector quantizers.
{\em IEEE Transactions on Information Theory}, 51:4022--4032, 2005.
\endOrigBibText\fi
\bptok{imsref}%
% NOT OUTPUTTED:
% number = 11
% doi = http://dx.doi.org/10.1109/TIT.2005.856980
% coden = IETTAW
% fjournal = Institute of Electrical and Electronics Engineers.
%Transactions on Information Theory
\endbibitem

%b4 ###
\bibitem{AnGyGy05}
%
\begin{barticle}[mr]
\bauthor{\bsnm{Antos},~\bfnm{Andr{\'a}s}\binits{A.}},
\bauthor{\bsnm{Gy{\"o}rfi},~\bfnm{L{\'a}szl{\'o}}\binits{L.}} \AND
\bauthor{\bsnm{Gy{\"o}rgy},~\bfnm{Andr{\'a}s}\binits{A.}}
(\byear{2005}).
\btitle{Individual convergence rates in empirical vector quantizer design}.
\bjournal{IEEE Trans. Inform. Theory}
\bvolume{51}
\bpages{4013--4022}.
\bid{doi={10.1109/TIT.2005.856976}, issn={0018-9448}, mr={2239017}}
\end{barticle}
%
\iffalse\OrigBibText
A.~Antos, L.~Gy\"orfi, and A.~Gy\"orgy.
Improved convergence rates in empirical vector quantizer design.
{\em IEEE Transactions on Information Theory}, 51:4013--4022, 2005.
\endOrigBibText\fi
\bptok{imsref}%
% NOT OUTPUTTED:
% number = 11
% doi = http://dx.doi.org/10.1109/TIT.2005.856976
% coden = IETTAW
% fjournal = Institute of Electrical and Electronics Engineers.
%Transactions on Information Theory
\endbibitem

%b5 ###
\bibitem{AuCa11}
%
\begin{barticle}[mr]
\bauthor{\bsnm{Audibert},~\bfnm{Jean-Yves}\binits{J.-Y.}} \AND
\bauthor{\bsnm{Catoni},~\bfnm{Olivier}\binits{O.}}
(\byear{2011}).
\btitle{Robust linear least squares regression}.
\bjournal{Ann. Statist.}
\bvolume{39}
\bpages{2766--2794}.
\bid{doi={10.1214/11-AOS918}, issn={0090-5364}, mr={2906886}}
\end{barticle}
%
\iffalse\OrigBibText
J.-Y. Audibert and O.~Catoni.
Robust linear least squares regression.
{\em The Annals of Statistics}, 39:2766--2794, 2011.
\endOrigBibText\fi
\bptok{imsref}%
% NOT OUTPUTTED:
% number = 5
% doi = http://dx.doi.org/10.1214/11-AOS918
% fjournal = The Annals of Statistics
\endbibitem

%b6 ###
\bibitem{BaLiLu98}
%
\begin{barticle}[mr]
\bauthor{\bsnm{Bartlett},~\bfnm{Peter~L.}\binits{P.~L.}},
\bauthor{\bsnm{Linder},~\bfnm{Tam{\'a}s}\binits{T.}} \AND
\bauthor{\bsnm{Lugosi},~\bfnm{G{\'a}bor}\binits{G.}}
(\byear{1998}).
\btitle{The minimax distortion redundancy in empirical quantizer design}.
\bjournal{IEEE Trans. Inform. Theory}
\bvolume{44}
\bpages{1802--1813}.
\bid{doi={10.1109/18.705560}, issn={0018-9448}, mr={1664098}}
\end{barticle}
%
\iffalse\OrigBibText
P.~Bartlett, T.~Linder, and G.~Lugosi.
The minimax distortion redundancy in empirical quantizer design.
{\em IEEE Trans. Inform. Theory}, 44:1802--1813, Sep. 1998.
\endOrigBibText\fi
\bptok{imsref}%
% NOT OUTPUTTED:
% number = 5
% doi = http://dx.doi.org/10.1109/18.705560
% coden = IETTAW
% fjournal = Institute of Electrical and Electronics Engineers.
%Transactions on Information Theory
\endbibitem

%b7 ###
\bibitem{BaMe06}
%
\begin{barticle}[mr]
\bauthor{\bsnm{Bartlett},~\bfnm{Peter~L.}\binits{P.~L.}} \AND
\bauthor{\bsnm{Mendelson},~\bfnm{Shahar}\binits{S.}}
(\byear{2006}).
\btitle{Empirical minimization}.
\bjournal{Probab. Theory Related Fields}
\bvolume{135}
\bpages{311--334}.
\bid{doi={10.1007/s00440-005-0462-3}, issn={0178-8051}, mr={2240689}}
\end{barticle}
%
\iffalse\OrigBibText
P.L. Bartlett and S.~Mendelson.
Empirical minimization.
{\em Probability Theory Related Fields}, 135:311--334, 2006.
\endOrigBibText\fi
\bptok{imsref}%
% NOT OUTPUTTED:
% number = 3
% doi = http://dx.doi.org/10.1007/s00440-005-0462-3
% coden = PTRFEU
% fjournal = Probability Theory and Related Fields
\endbibitem

%b8 ###
\bibitem{BiDeLu08}
%
\begin{barticle}[mr]
\bauthor{\bsnm{Biau},~\bfnm{G{\'e}rard}\binits{G.}},
\bauthor{\bsnm{Devroye},~\bfnm{Luc}\binits{L.}} \AND
\bauthor{\bsnm{Lugosi},~\bfnm{G{\'a}bor}\binits{G.}}
(\byear{2008}).
\btitle{On the performance of clustering in {H}ilbert spaces}.
\bjournal{IEEE Trans. Inform. Theory}
\bvolume{54}
\bpages{781--790}.
\bid{doi={10.1109/TIT.2007.913516}, issn={0018-9448}, mr={2444554}}
\bptnote{check year}%
\end{barticle}
%
\iffalse\OrigBibText
G.~Biau, L.~Devroye, and G.~Lugosi.
On the performance of clustering in {H}ilbert spaces.
{\em IEEE Transactions on Information Theory}, 54:781--790, 200.
\endOrigBibText\fi
\bptok{imsref}%
% NOT OUTPUTTED:
% number = 2
% doi = http://dx.doi.org/10.1109/TIT.2007.913516
% coden = IETTAW
% fjournal = Institute of Electrical and Electronics Engineers.
%Transactions on Information Theory
\endbibitem

%b9 ###
\bibitem{BoBoLu05}
%
\begin{barticle}[mr]
\bauthor{\bsnm{Boucheron},~\bfnm{St{\'e}phane}\binits{S.}},
\bauthor{\bsnm{Bousquet},~\bfnm{Olivier}\binits{O.}} \AND
\bauthor{\bsnm{Lugosi},~\bfnm{G{\'a}bor}\binits{G.}}
(\byear{2005}).
\btitle{Theory of classification: A survey of some recent advances}.
\bjournal{ESAIM Probab. Stat.}
\bvolume{9}
\bpages{323--375}.
\bid{doi={10.1051/ps:2005018}, issn={1292-8100}, mr={2182250}}
\end{barticle}
%
\iffalse\OrigBibText
S.~Boucheron, O.~Bousquet, and G.~Lugosi.
Theory of classification: a survey of some recent advances.
{\em ESAIM. Probability and Statistics}, 9:323--375, 2005.
\endOrigBibText\fi
\bptok{imsref}%
% NOT OUTPUTTED:
% doi = http://dx.doi.org/10.1051/ps:2005018
% fjournal = ESAIM. Probability and Statistics
\endbibitem

%b10 ###
\bibitem{gabor}
%
\begin{bbook}[mr]
\bauthor{\bsnm{Boucheron},~\bfnm{St{\'e}phane}\binits{S.}},
\bauthor{\bsnm{Lugosi},~\bfnm{G{\'a}bor}\binits{G.}} \AND
\bauthor{\bsnm{Massart},~\bfnm{Pascal}\binits{P.}}
(\byear{2013}).
\btitle{Concentration Inequalities: A~Nonasymptotic Theory of
Independence, with a Foreword by Michel Ledoux}.
\bpublisher{Oxford Univ. Press},
\blocation{Oxford}.
\bid{doi={10.1093/acprof:oso/9780199535255.001.0001}, mr={3185193}}
\end{bbook}
%
\iffalse\OrigBibText
S.~Boucheron, G.~Lugosi, and P.~Massart.
{\em Concentration Inequalities: A Nonasymptotic Theory of
Independence}.
Oxford University Press, 2013.
\endOrigBibText\fi
\bptok{imsref}%
% NOT OUTPUTTED:
% doi = http://dx.doi.org/10.1093/acprof:oso/9780199535255.001.0001
% isbn = 978-0-19-953525-5
% fpage = x+481
\endbibitem

%b11 ###
\bibitem{BuCeLu13}
%
\begin{barticle}[mr]
\bauthor{\bsnm{Bubeck},~\bfnm{S{\'e}bastien}\binits{S.}},
\bauthor{\bsnm{Cesa-Bianchi},~\bfnm{Nicol{\`o}}\binits{N.}} \AND
\bauthor{\bsnm{Lugosi},~\bfnm{G{\'a}bor}\binits{G.}}
(\byear{2013}).
\btitle{Bandits with heavy tail}.
\bjournal{IEEE Trans. Inform. Theory}
\bvolume{59}
\bpages{7711--7717}.
\bid{doi={10.1109/TIT.2013.2277869}, issn={0018-9448}, mr={3124669}}
\end{barticle}
%
\iffalse\OrigBibText
S.~Bubeck, N.~Cesa-Bianchi, and G.~Lugosi.
Badits with heavy tail.
{\em IEEE Transactions on Information Theory}, 59:7711--7717, 2013.
\endOrigBibText\fi
\bptok{imsref}%
% NOT OUTPUTTED:
% number = 11
% doi = http://dx.doi.org/10.1109/TIT.2013.2277869
% fjournal = Institute of Electrical and Electronics Engineers.
%Transactions on Information Theory
\endbibitem

%b12 ###
\bibitem{Cat10}
%
\begin{barticle}[mr]
\bauthor{\bsnm{Catoni},~\bfnm{Olivier}\binits{O.}}
(\byear{2012}).
\btitle{Challenging the empirical mean and empirical variance: A
deviation study}.
\bjournal{Ann. Inst. Henri Poincar\'e Probab. Stat.}
\bvolume{48}
\bpages{1148--1185}.
\bid{doi={10.1214/11-AIHP454}, issn={0246-0203}, mr={3052407}}
\end{barticle}
%
\iffalse\OrigBibText
O.~Catoni.
Challenging the empirical mean and empirical variance: a deviation
study.
{\em Annales de l'Institut Henri Poincar{\'e}, Probabilit{\'
e}s et
Statistiques}, 48(4):1148--1185, 2012.
\endOrigBibText\fi
\bptok{imsref}%
% NOT OUTPUTTED:
% number = 4
% doi = http://dx.doi.org/10.1214/11-AIHP454
% fjournal = Annales de l'Institut Henri Poincar\'e Probabilit\'es et
%Statistiques
\endbibitem

%b13 ###
\bibitem{Dud78}
%
\begin{barticle}[mr]
\bauthor{\bsnm{Dudley},~\bfnm{R.~M.}\binits{R.~M.}}
(\byear{1978}).
\btitle{Central limit theorems for empirical measures}.
\bjournal{Ann. Probab.}
\bvolume{6}
\bpages{899--929}.
\bid{issn={0091-1798}, mr={0512411}}
\bptnote{check pages}%
\end{barticle}
%
\iffalse\OrigBibText
R.M. Dudley.
Central limit theorems for empirical measures.
{\em Annals of Probability}, 6:899--929, 1978.
\endOrigBibText\fi
\bptok{imsref}%
% NOT OUTPUTTED:
% url =
%http://links.jstor.org/sici?sici=0091-1798(197812)6:6<899:CLTFEM>2.0.CO;2-O&origin=MSN
% number = 6
% coden = APBYAE
% fjournal = The Annals of Probability
\endbibitem

%b14 ###
\bibitem{Emb97}
%
\begin{bbook}[mr]
\bauthor{\bsnm{Embrechts},~\bfnm{Paul}\binits{P.}},
\bauthor{\bsnm{Kl{\"u}ppelberg},~\bfnm{Claudia}\binits{C.}} \AND
\bauthor{\bsnm{Mikosch},~\bfnm{Thomas}\binits{T.}}
(\byear{1997}).
\btitle{Modelling Extremal Events for Insurance and Finance}.
%\bseries{Applications of Mathematics (New York)}
%\bvolume{33}.
\bpublisher{Springer},
\blocation{Berlin}.
\bid{doi={10.1007/978-3-642-33483-2}, mr={1458613}}
\end{bbook}
%
\iffalse\OrigBibText
P.~Embrechts, C.~Kl\"uppelberg, and T.~Mikosch.
{\em Modelling Extremal Events for Insurance anf Finance}.
Springer-Verlang, Frankfurt, 1997.
\endOrigBibText\fi
\bptok{imsref}%
% NOT OUTPUTTED:
% doi = http://dx.doi.org/10.1007/978-3-642-33483-2
% isbn = 3-540-60931-8
% fpage = xvi+645
\endbibitem

%b15 ###
\bibitem{Fam63}
%
\begin{barticle}[auto:parserefs-M02]
\bauthor{\bsnm{Fama},~\bfnm{E.~F.}\binits{E.~F.}}
(\byear{1963}).
\btitle{Mandelbrot and the stable {P}aretian hypothesis}.
\bjournal{The Journal of Business}
\bvolume{36}
\bpages{420--429}.
\end{barticle}
%
\iffalse\OrigBibText
E.~F. Fama.
Mandelbrot and the stable {P}aretian hypothesis.
{\em The Journal of Business}, 36:420--429, 1963.
\endOrigBibText\fi
\bptok{imsref}%
\endbibitem

%b16 ###
\bibitem{Fin03}
%
\begin{bbook}[auto:parserefs-M02]
\bauthor{\bsnm{Finkenstadt},~\bfnm{B.}\binits{B.}} \AND
\bauthor{\bsnm{Rootz{\'{e}}n},~\bfnm{H.}\binits{H.}}
(\byear{2003}).
\btitle{Extreme Values in Finance, Telecommunications and the Enviroment}.
\bpublisher{Chapman \& Hall},
\blocation{New York}.
\end{bbook}
%
\iffalse\OrigBibText
B.~Finkenstadt and H.~Rootz\'en.
{\em Extreme Values in Finance, Telecommunications and the
Enviroment}.
Chapman and Hall, New York, 2003.
\endOrigBibText\fi
\bptok{imsref}%
\endbibitem

%b17 ###
\bibitem{HsuSa13}
%
\begin{barticle}[auto:parserefs-M02]
\bauthor{\bsnm{Hsu},~\bfnm{D.}\binits{D.}} \AND
\bauthor{\bsnm{Sabato},~\bfnm{S.}\binits{S.}}
(\byear{2013}).
\btitle{Approximate loss minimization with heavy tails}.
\bnote{Preprint. Available at \arxivurl{arXiv:1307.1827}.}
\end{barticle}
%
\iffalse\OrigBibText
D.~Hsu and S.~Sabato.
Approximate loss minimization with heavy tails.
{\em Computing Research Repository}, abs/1307.1827, 2013.
\endOrigBibText\fi
\bptok{imsref}%
\endbibitem

%b18 ###
\bibitem{Kol06}
%
\begin{barticle}[mr]
\bauthor{\bsnm{Koltchinskii},~\bfnm{Vladimir}\binits{V.}}
(\byear{2006}).
\btitle{Local {R}ademacher complexities and oracle inequalities in risk
minimization}.
\bjournal{Ann. Statist.}
\bvolume{34}
\bpages{2593--2656}.
\bid{doi={10.1214/009053606000001019}, issn={0090-5364}, mr={2329442}}
\bptnote{check volume, check pages}%
\end{barticle}
%
\iffalse\OrigBibText
V.~Koltchinskii.
Local {R}ademacher complexities and oracle inequalities in risk
minimization.
{\em Annals of Statistics}, 36:00--00, 2006.
\endOrigBibText\fi
\bptok{imsref}%
% NOT OUTPUTTED:
% number = 6
% doi = http://dx.doi.org/10.1214/009053606000001019
% coden = ASTSC7
% fjournal = The Annals of Statistics
\endbibitem

%b19 ###
\bibitem{LeOl12}
%
\begin{bmisc}[auto:parserefs-M02]
\bauthor{\bsnm{Lerasle},~\bfnm{M.}\binits{M.}} \AND
\bauthor{\bsnm{Oliveira},~\bfnm{R.~I.}\binits{R.~I.}}
(\byear{2012}).
\bhowpublished{Robust empirical mean estimators.
Manuscript.}
\end{bmisc}
%
\iffalse\OrigBibText
M.~Lerasle and R.I. Oliveira.
Robust empirical mean estimators.
{\em manuscript}, 2012.
\endOrigBibText\fi
\bptok{imsref}%
% NOT OUTPUTTED:
% sortkey = Lerasle(2012
\endbibitem

%b20 ###
\bibitem{Lev13}
%
\begin{barticle}[mr]
\bauthor{\bsnm{Levrard},~\bfnm{Cl{\'e}ment}\binits{C.}}
(\byear{2013}).
\btitle{Fast rates for empirical vector quantization}.
\bjournal{Electron. J. Stat.}
\bvolume{7}
\bpages{1716--1746}.
\bid{doi={10.1214/13-EJS822}, issn={1935-7524}, mr={3080408}}
\end{barticle}
%
\iffalse\OrigBibText
C.~Levrard.
Fast rates for empirical vector quantization.
{\em Electronic Journal of Statistics}, pages 1716--1746, 2013.
\endOrigBibText\fi
\bptok{imsref}%
% NOT OUTPUTTED:
% doi = http://dx.doi.org/10.1214/13-EJS822
% fjournal = Electronic Journal of Statistics
\endbibitem

%b21 ###
\bibitem{Lin02}
%
\begin{bincollection}[mr]
\bauthor{\bsnm{Linder},~\bfnm{T.}\binits{T.}}
(\byear{2002}).
\btitle{Learning-theoretic methods in vector quantization}.
In \bbooktitle{Principles of Nonparametric Learning ({U}dine, 2001)}
(\beditor{\bfnm{L.}\binits{L.}~\bsnm{Gy\"orfi}}, ed.).
\bseries{CISM Courses and Lectures}
\bvolume{434}
\bpages{163--210}.
\bpublisher{Springer},
\blocation{Vienna}.
\bid{mr={1987659}}
\bptnote{check pages}%
\end{bincollection}
%
\iffalse\OrigBibText
T.~Linder.
Learning-theoretic methods in vector quantization.
In L.~Gy\"orfi, editor, {\em Principles of nonparametric learning},
number 434 in {CISM} Courses and Lecture Notes. Springer-Verlag, New York,
2002.
\endOrigBibText\fi
\bptok{imsref}%
\endbibitem

%b22 ###
\bibitem{Man63}
%
\begin{barticle}[auto:parserefs-M02]
\bauthor{\bsnm{Mandelbrot},~\bfnm{B.}\binits{B.}}
(\byear{1963}).
\btitle{The variation of certain speculative prices}.
\bjournal{The Journal of Business}
\bvolume{36}
\bpages{394--419}.
\end{barticle}
%
\iffalse\OrigBibText
B.~Mandelbrot.
The variation of certain speculative prices.
{\em The Journal of Business}, 36:394--419, 1963.
\endOrigBibText\fi
\bptok{imsref}%
\endbibitem

%b23 ###
\bibitem{Mas06}
%
\begin{bbook}[mr]
\bauthor{\bsnm{Massart},~\bfnm{Pascal}\binits{P.}}
(\byear{2007}).
\btitle{Concentration Inequalities and Model Selection}.
\bseries{Lecture Notes in Math.}
\bvolume{1896}.
\bpublisher{Springer},
\blocation{Berlin}.
%\bnote{Lectures from the 33rd Summer School on Probability Theory held
%in Saint-Flour, July 6--23, 2003, With a foreword by Jean Picard}.
\bid{mr={2319879}}
\bptnote{check year}%
\end{bbook}
%
\iffalse\OrigBibText
P.~Massart.
{\em Concentration inequalities and model selection}.
Ecole d'\'et\'e de Probabilit\'es de Saint-Flour 2003.
Lecture Notes
in Mathematics. Springer, 2006.
\endOrigBibText\fi
\bptok{imsref}%
% NOT OUTPUTTED:
% isbn = 978-3-540-48497-4; 3-540-48497-3
% fpage = xiv+337
\endbibitem

%b24 ###
\bibitem{mat02}
%
\begin{bbook}[mr]
\bauthor{\bsnm{Matou{\v{s}}ek},~\bfnm{Ji{\v{r}}{\'{\i}}}\binits{J.}}
(\byear{2002}).
\btitle{Lectures on Discrete Geometry}.
\bseries{Graduate Texts in Mathematics}
\bvolume{212}.
\bpublisher{Springer},
\blocation{New York}.
\bid{doi={10.1007/978-1-4613-0039-7}, mr={1899299}}
\end{bbook}
%
\iffalse\OrigBibText
J.~Matou{\v{s}}ek.
{\em Lectures on discrete geometry}.
Springer, 2002.
\endOrigBibText\fi
\bptok{imsref}%
% NOT OUTPUTTED:
% doi = http://dx.doi.org/10.1007/978-1-4613-0039-7
% isbn = 0-387-95373-6
% fpage = xvi+481
\endbibitem

%b25 ###
\bibitem{MaPo10}
%
\begin{barticle}[mr]
\bauthor{\bsnm{Maurer},~\bfnm{Andreas}\binits{A.}} \AND
\bauthor{\bsnm{Pontil},~\bfnm{Massimiliano}\binits{M.}}
(\byear{2010}).
\btitle{{$K$}-dimensional coding schemes in {H}ilbert spaces}.
\bjournal{IEEE Trans. Inform. Theory}
\bvolume{56}
\bpages{5839--5846}.
\bid{doi={10.1109/TIT.2010.2069250}, issn={0018-9448}, mr={2808936}}
\end{barticle}
%
\iffalse\OrigBibText
A.~Maurer and M.~Pontil.
$k$-dimensional coding schemes in {H}ilbert spaces.
{\em IEEE Transactions on Information Theory}, 56:5839--5846, 2010.
\endOrigBibText\fi
\bptok{imsref}%
% NOT OUTPUTTED:
% number = 11
% doi = http://dx.doi.org/10.1109/TIT.2010.2069250
% coden = IETTAW
% fjournal = Institute of Electrical and Electronics Engineers.
%Transactions on Information Theory
\endbibitem

%b26 ###
\bibitem{Men14}
%
\begin{bmisc}[auto:parserefs-M02]
\bauthor{\bsnm{Mendelson},~\bfnm{S.}\binits{S.}}
(\byear{2014}).
\bhowpublished{Learning without concentration.
Preprint. Available at \arxivurl{arXiv:1401.0304}.}
\end{bmisc}
%
\iffalse\OrigBibText
S.~Mendelson.
Learning without concentration.
{\em arXiv preprint arXiv:1401.0304}, 2014.
\endOrigBibText\fi
\bptok{imsref}%
\endbibitem

%b27 ###
\bibitem{Min13}
\begin{barticle}[mr]
\bauthor{\bsnm{Minsker},~\bfnm{Stanislav}\binits{S.}}
(\byear{2015}).
\btitle{Geometric median and robust estimation in {B}anach spaces}.
\bjournal{Bernoulli}
\bvolume{21}
\bpages{2308--2335}.
\bid{doi={10.3150/14-BEJ645}, issn={1350-7265}, mr={3378468}}
\bptnote{check volume, check pages, check year}%
\end{barticle}
%
\iffalse\OrigBibText
S.~Minsker.
Geometric median and robust estimation in {B}anach spaces.
{\em arXiv preprint}, 2013.
\endOrigBibText\fi
\bptok{imsref}%
% NOT OUTPUTTED:
%   number = 4
%   doi = http://dx.doi.org/10.3150/14-BEJ645
%   fjournal = Bernoulli. Official Journal of the Bernoulli Society for Mathematical Statistics and Probability
\endbibitem

%b28 ###
\bibitem{NeYu83}
%
\begin{bbook}[mr]
\bauthor{\bsnm{Nemirovsky},~\bfnm{A.~S.}\binits{A.~S.}} \AND
\bauthor{\bsnm{Yudin},~\bfnm{D.~B.}\binits{D.~B.}}
(\byear{1983}).
\btitle{Problem Complexity and Method Efficiency in Optimization}.
\bpublisher{Wiley},
\blocation{New York}.
\bid{mr={0702836}}
\end{bbook}
%
\iffalse\OrigBibText
A.S. Nemirovsky and D.B. Yudin.
Problem complexity and method efficiency in optimization.
1983.
\endOrigBibText\fi
\bptok{imsref}%
% NOT OUTPUTTED:
% isbn = 0-471-10345-4
% fpage = xv+388
\endbibitem

%b29 ###
\bibitem{Pol81}
%
\begin{barticle}[mr]
\bauthor{\bsnm{Pollard},~\bfnm{David}\binits{D.}}
(\byear{1981}).
\btitle{Strong consistency of {$k$}-means clustering}.
\bjournal{Ann. Statist.}
\bvolume{9}
\bpages{135--140}.
\bid{issn={0090-5364}, mr={0600539}}
\end{barticle}
%
\iffalse\OrigBibText
D.~Pollard.
Strong consistency of $k$-means clustering.
{\em Annals of Statistics}, 9, no. 1:135--140, 1981.
\endOrigBibText\fi
\bptok{imsref}%
% NOT OUTPUTTED:
% url =
%http://links.jstor.org/sici?sici=0090-5364(198101)9:1<135:SCOC>2.0.CO;2-7&origin=MSN
% number = 1
% coden = ASTSC7
% fjournal = The Annals of Statistics
\endbibitem

%b30 ###
\bibitem{Pol82b}
%
\begin{barticle}[mr]
\bauthor{\bsnm{Pollard},~\bfnm{David}\binits{D.}}
(\byear{1982}).
\btitle{A central limit theorem for {$k$}-means clustering}.
\bjournal{Ann. Probab.}
\bvolume{10}
\bpages{919--926}.
\bid{issn={0091-1798}, mr={0672292}}
\end{barticle}
%
\iffalse\OrigBibText
D.~Pollard.
A central limit theorem for $k$-means clustering.
{\em Annals of Probability}, 10(4):919--926, 1982.
\endOrigBibText\fi
\bptok{imsref}%
% NOT OUTPUTTED:
% url =
%http://links.jstor.org/sici?sici=0091-1798(198211)10:4<919:ACLTFC>2.0.CO;2-Q&origin=MSN
% number = 4
% coden = APBYAE
% fjournal = The Annals of Probability
\endbibitem

%b31 ###
\bibitem{Pol82a}
%
\begin{barticle}[mr]
\bauthor{\bsnm{Pollard},~\bfnm{David}\binits{D.}}
(\byear{1982}).
\btitle{Quantization and the method of {$k$}-means}.
\bjournal{IEEE Trans. Inform. Theory}
\bvolume{28}
\bpages{199--205}.
\bid{doi={10.1109/TIT.1982.1056481}, issn={0018-9448}, mr={0651814}}
\end{barticle}
%
\iffalse\OrigBibText
D.~Pollard.
Quantization and the method of $k$-means.
{\em IEEE Trans. Inform. Theory}, IT-28:199--205, 1982.
\endOrigBibText\fi
\bptok{imsref}%
% NOT OUTPUTTED:
% number = 2
% doi = http://dx.doi.org/10.1109/TIT.1982.1056481
% coden = IETTAW
% fjournal = Institute of Electrical and Electronics Engineers.
%Transactions on Information Theory
\endbibitem

%b32 ###
\bibitem{talagrand}
%
\begin{bbook}[mr]
\bauthor{\bsnm{Talagrand},~\bfnm{Michel}\binits{M.}}
(\byear{2005}).
\btitle{The Generic Chaining}.
\bpublisher{Springer},
\blocation{Berlin}.
%\bnote{Upper and lower bounds of stochastic processes}.
\bid{mr={2133757}}
\end{bbook}
%
\iffalse\OrigBibText
M.~Talagrand.
{\em The generic chaining}.
Springer, 2005.
\endOrigBibText\fi
\bptok{imsref}%
% NOT OUTPUTTED:
% isbn = 3-540-24518-9
% fpage = viii+222
\endbibitem

%b33 ###
\bibitem{TeDa13}
%
\begin{bmisc}[auto:parserefs-M02]
\bauthor{\bsnm{Telgarsky},~\bfnm{M.}\binits{M.}} \AND
\bauthor{\bsnm{Dasgupta},~\bfnm{S.}\binits{S.}}
(\byear{2013}).
\bhowpublished{Moment-based uniform deviation bounds for $ k $-means
and friends.
Preprint. Available at \arxivurl{arXiv:1311.1903}.}
\end{bmisc}
%
\iffalse\OrigBibText
M.~Telgarsky and S.~Dasgupta.
Moment-based uniform deviation bounds for $ k $-means and friends.
{\em arXiv preprint arXiv:1311.1903}, 2013.
\endOrigBibText\fi
\bptok{imsref}%
\endbibitem

%b34 ###
\bibitem{Geer00}
%
\begin{bbook}[mr]
\bauthor{\bsnm{van~de Geer},~\bfnm{Sara~A.}\binits{S.~A.}}
(\byear{2000}).
\btitle{Applications of Empirical Process Theory}.
%\bseries{Cambridge Series in Statistical and Probabilistic Mathematics}
%\bvolume{6}.
\bpublisher{Cambridge Univ. Press},
\blocation{Cambridge}.
\bid{mr={1739079}}
\end{bbook}
%
\iffalse\OrigBibText
S.~van~de Geer.
{\em Empirical Processes in {M}-Estimation}.
Cambridge University Press, Cambridge, UK, 2000.
\endOrigBibText\fi
\bptok{imsref}%
% NOT OUTPUTTED:
% isbn = 0-521-65002-X
% fpage = xii+286
\endbibitem

%b35 ###
\bibitem{vaWe96}
%
\begin{bbook}[mr]
\bauthor{\bsnm{van~der Vaart},~\bfnm{Aad~W.}\binits{A.~W.}} \AND
\bauthor{\bsnm{Wellner},~\bfnm{Jon~A.}\binits{J.~A.}}
(\byear{1996}).
\btitle{Weak Convergence and Empirical Processes}.
\bpublisher{Springer},
\blocation{New York}.
%\bnote{With applications to statistics}.
\bid{doi={10.1007/978-1-4757-2545-2}, mr={1385671}}
\end{bbook}
%
\iffalse\OrigBibText
A.W. {van der Vaart} and J.A. Wellner.
{\em Weak convergence and empirical processes}.
Springer-Verlag, New York, 1996.
\endOrigBibText\fi
\bptok{imsref}%
% NOT OUTPUTTED:
% doi = http://dx.doi.org/10.1007/978-1-4757-2545-2
% isbn = 0-387-94640-3
% fpage = xvi+508
\endbibitem

\end{thebibliography}
\end{document}